\documentclass[conference,letterpaper]{IEEEtran}
\usepackage[utf8]{inputenc}
\usepackage[T1]{fontenc}
\usepackage{url}
\usepackage{cite}
\usepackage{longtable}
\usepackage[cmex10]{amsmath}
\usepackage{algorithm}
\usepackage{algorithmic}
\usepackage{amssymb,amsfonts}
\usepackage{graphicx}
\usepackage{textcomp}
\usepackage{xcolor}
\usepackage{tikz}
\usepackage{ifthen}
\usepackage{array}
\usepackage{hyperref}

\newcommand{\bolds}{\mathbf{s}}

\newcommand{\cA}{\mathcal{A}}
\newcommand{\cB}{\mathcal{B}}

\newcommand{\cK}{\mathcal{K}}

\newcommand{\boldA}{\mathbf{A}}

\newcommand{\boldC}{\mathbf{C}}

\newcommand{\boldS}{\mathbf{S}}

\newcommand{\boldU}{\mathbf{U}}
\newcommand{\boldV}{\mathbf{V}}

\newcommand{\boldY}{\mathbf{Y}}

\newtheorem{pid axiom}{PID Axiom}
\newtheorem{sid axiom}{SID Axiom}

\newtheorem{remark}{Remark}
\newtheorem{definition}{Definition}

\newtheorem{corollary}{Corollary}

\newtheorem{theorem}{Theorem}

\newtheorem{proposition}{Proposition}
\usepackage{ifthen}
\newboolean{showAppendix}
\setboolean{showAppendix}{true}  

\def\BibTeX{{\rm B\kern-.05em{\sc i\kern-.025em b}\kern-.08em
T\kern-.1667em\lower.7ex\hbox{E}\kern-.125emX}}

\newcommand{\E}{\mathbb{E}}
\newcommand{\Cov}{\operatorname{Cov}}

\newcommand{\dd}{\mathrm{d}}
\newcommand{\R}{\mathbb{R}}

\title{Closed-Form Gaussian Estimators for Multi-Source Partial Information Decomposition}

\author{\textbf{Aobo Lyu}$^\star$, \textbf{Andrew Clark}$^{\star}$, and \textbf{Netanel Raviv}$^\dagger$\\
  $^\star$Department of Electrical and Systems Engineering, Washington University in St. Louis, St. Louis, MO, USA\\
  $^\dagger$Department of Computer Science and Engineering, Washington University in St. Louis, St. Louis, MO, USA\\
\texttt{aobo.lyu@wustl.edu}, \texttt{andrewclark@wustl.edu}, \texttt{netanel.raviv@wustl.edu}}

\begin{document}

\maketitle

\begin{abstract}
\ifthenelse{\boolean{showAppendix}}
{}{THIS PAPER IS ELIGIBLE FOR THE STUDENT PAPER AWARD.

}
Computing multi-source partial information decomposition (PID) for continuous data is hard: existing closed-form Gaussian estimators are restricted to two source variables, while continuous arbitrary-source estimators are typically learning-based and do not provide closed-form expressions.
To address this, we develop closed-form Gaussian estimators for multi-source PID.
We provide two-source redundancy, multi-source unique information, the $K$-th order synergistic effect from source subsets of size~$K$, and the total synergistic effect.
The estimators are derived from the conditional-independence-based information measures introduced in our earlier work, under which every quantity reduces to a log-determinant expression in covariance blocks of the system.
The resulting estimator is plug-in consistent, affine invariant, source-permutation symmetric, and additive over independent systems.
We validate it on a controlled Gaussian benchmark, evaluate its computational efficiency against baselines, and confirm its numerical stability in finite-sample regimes.
To our knowledge, this is the first covariance-based closed-form estimator that provides multi-source continuous PID measures.
\end{abstract}

\section{Introduction}
\label{sec:introduction}
Partial information decomposition (PID), introduced by Williams and Beer \cite{williams2010nonnegative}, refines the joint mutual information $I(S_1,\ldots,S_N;T)$ between source variables $S_1,\ldots,S_N$ and a target~$T$ into information atoms, such as redundant, unique, and synergistic information.
Such decompositions are central whenever one needs to know not only \emph{whether} sources predict $T$ but \emph{how} predictive information distributes across them, with applications in neural population coding \cite{schneidman2003synergy}, brain functional connectivity \cite{luppi2022synergistic,gatica2021high}, fairness and feature attribution \cite{liang2023quantifying,dutta2020information}, and dynamical systems \cite{mediano2025toward}.

Despite a decade of progress, no PID estimator for continuous data is simultaneously closed-form and applicable to arbitrary $N \geq 2$.
Existing closed-form Gaussian PID estimators \cite{barrett2015exploration,faes2017multiscale,kay2018exact,niu2019measure,kay2024partial,venkatesh2022partial,venkatesh2023gaussian,gurushankar2022extracting} are covariance-based and sample efficient, but restricted to two sources.
Multivariate measures applicable for arbitrary $N$, such as O-information \cite{rosas2019quantifying} and integrated information decomposition ($\Phi$ID) \cite{mediano2025toward}, admit Gaussian closed forms but produce only a single scalar or operate in a different, non-target-directed framework.
Continuous PID estimators that admit arbitrary $N$, such as the shared-information measures of \cite{schick2021partial,ehrlich2024partial}, rely on nonparametric neighborhood or kernel machinery, and hence do not have closed-form expressions.
Other continuous (non-Gaussian) estimators are primarily two-source~\cite{pakman2021estimating,bara2025partial}.
Restricting to Gaussian variables turns information measures into covariance functionals, enabling stable finite-sample estimation that current methods do not~achieve.

This paper closes this gap by leveraging the conditional-independence-based measures that we introduced in our prior work \cite{lyu2026multivariate} (see Definition~\ref{def:conditional_copy_family}).
This framework is the unique discrete closed-form PID measure simultaneously satisfying Monotonicity, Additivity, Continuity, and Strong Independent Identity~\cite{lyu2026multivariate,lyu2024explicit}, providing principled grounds for our Gaussian extension.
We extend these measures to continuous jointly Gaussian systems, where redundant, unique, and synergistic information admit closed-form log-determinant expressions.
Specifically, we measure the synergistic information that requires source subsets of exactly size~$K \!\le \!N$, which we call the $K\!$-th order \textit{synergistic effect} $SE_K$, and further define a meaningful extension: The \textit{total synergistic effect}~$TSE=\sum_K SE_K$.
These measures can be computed via a single covariance-only identity and enable a closed-form computation that has been missing from the literature.

The contributions of this paper are: (i) a single closed-form expression for the conditional covariance induced by an arbitrary source-subset family (Theorem~\ref{thm:main_identity}), from which the quantities $SE_K$ and~$TSE$ can be computed, as well as unique and redundant information; 
(ii) an efficient computation of TSE without summing over all~$SE_K$, hence applicable to very large-scale systems;
(iii) a combined statement of plug-in consistency, affine invariance, source-permutation symmetry, and additivity over independent systems (Theorem~\ref{thm:structural}), together with a ridge-regularized form (Proposition~\ref{prop:ridge}) for degenerate cases; 
(iv) a parametric Gaussian benchmark with explicit per-subset contributions, on which we verify the computation of~$\{SE_K\}_{K=2}^N$ (which we call the \textit{synergy spectrum}) against discretized PID and non-PID scalar baselines.
The covariance-only structure makes the proposed estimator an off-the-shelf method for downstream tasks---representation learning, fairness auditing, emergence quantification in dynamical systems---without kernel-density or variational machinery.

\section{Preliminaries}
\label{sec:preliminaries}

All logarithms are natural and all information quantities are in nats.
We write $H$ for discrete entropy and $h$ for differential (continuous) entropy.
We use $d_{(\cdot)}$ for the dimension of the variable indicated by the subscript (e.g., $d_T = \dim(T)$, $d_i = \dim(S_i)$).
All covariance and conditional covariance matrices are assumed to be positive definite.

\subsection{Gaussian Information from Covariance}
\label{subsec:gaussian_entropy_review}
For a continuous random vector $X \in \R^{d_X}$ with density $p_X$, the differential entropy is $h(X) \triangleq -\int p_X(x) \log p_X(x)\,\dd x$. For a Gaussian $X \sim \mathcal{N}(\mu, \Sigma)$ with $\Sigma \succ 0$, 
\begin{align}
  h(X)
  &= \frac{1}{2}\log\big((2\pi e)^{d_X} \det\Sigma\big).
  \label{eq:gaussian_entropy}
\end{align}
For jointly Gaussian $(X, Y)$ with $\Sigma_{YY} \succ 0$, the
conditional covariance of $X $ given $ Y$ is $\Sigma_{X \mid Y}
  \triangleq \Sigma_{XX} - \Sigma_{XY}\Sigma_{YY}^{-1}\Sigma_{YX}$,
and the conditional entropy
$h(X \mid Y) = \tfrac{1}{2}\log \big(\!(2\pi e)^{d_X} \!\det\Sigma_{X\mid Y}\!\big)\!$ \cite{cover1999elements}, yielding the mutual information
\begin{align}
  I(X; Y)
  &= \frac{1}{2}\log
  \frac{\det\Sigma_{XX}}{\det\Sigma_{X\mid Y}}.
  \label{eq:gaussian_mi_logdet}
\end{align}
Equation~\eqref{eq:gaussian_mi_logdet} is the template form of all estimators derived in Section~\ref{sec:gaussian_estimators}. Each PID quantity will be expressed as a log-determinant ratio of two covariance matrices.

\subsection{Conditional-Independence-based Information Measures}
\label{subsec:conditional_copy}
We summarize the PID measures of~\cite{lyu2026multivariate} as follows.
Let $\boldS_{[N]} = (S_1, \ldots, S_N)$ denote the sources, with $[N] = \{1, \ldots, N\}$.
For a nonempty subset $\cK \subseteq [N]$ write $\boldS_{\cK} = (S_i)_{i \in \cK}$, and for $K \in [N]$ let $\binom{[N]}{K} = \{\cK \subseteq [N] : |\cK| = K\}$.
The shorthand $\boldS_{\setminus i}$ stands for $\boldS_{[N] \setminus \{i\}}$.

\begin{definition}[Conditional-independent family]
  \label{def:conditional_copy_family}
Let $\boldA$ be a nonempty collection of nonempty subsets of $[N]$.
The conditional-independent family $\{\boldS_{\cK}' : \cK \in \boldA\}$ is defined by
  \begin{align*}
    \Pr\!\left(\bigcap_{\cK \in \boldA}
    \{\boldS_{\cK}' = \bolds_{\cK}\}\;\Big|\; T = t\right)
    \triangleq
    \prod_{\cK \in \boldA}
    \Pr(\boldS_{\cK} = \bolds_{\cK} \mid T = t).
  \end{align*}

Each auxiliary block $\boldS_{\cK}'$ is a copy of $\boldS_{\cK}$ in the sense that it shares the same conditional distribution given $T$, while distinct auxiliary blocks are conditionally independent given $T$.
We write $\boldY_{\boldA} = (\boldS_{\cA_1}', \ldots, \boldS_{\cA_m}')$ for the stacked auxiliary vector when $\boldA = \{\cA_1, \ldots, \cA_m\}$.
When two subsets in $\boldA$ overlap, shared variables appear as independent copies, so $\dim(\boldY_{\boldA}) = \sum_{\cA \in \boldA} d_{\cA}$, where $d_{\cA} = \sum_{i \in \cA} d_i$.
\end{definition}

The conditional-independent construction preserves the marginal conditional probability $\Pr(\boldS_{\cK} \mid T)$ for every $\cK \in \boldA$ but breaks the high-order dependencies among the subsets given $T$. 
Intuitively, it isolates the part of the joint structure that is determined by source-subset-to-target marginals from the part that comes from how the subsets jointly organize around $T$.
It thereby exposes distinct PID components when classical information-theoretic quantities are applied to $(\boldY_{\boldA}, T)$ versus $(\boldS_{[N]}, T)$, as formalized in Definition~\ref{def:measures_quantities}.

\begin{definition}[Information Measures]
  \label{def:measures_quantities}
Given target $T$ and sources $\boldS_{[N]}$, the PID measures in~\cite{lyu2026multivariate} are defined as follows.
 
For two sources $S_1$ and $S_2$, the redundancy is defined as
{
\setlength{\abovedisplayskip}{5pt}
\setlength{\belowdisplayskip}{5pt}
  \begin{align}
    \operatorname{Red}(S_1, S_2 \to T)
    &\triangleq I(S_1'; S_2').
    \label{eq:def_red_twosource}
  \end{align}
}
 
For multiple sources $\boldS_{[N]}$, the general unique information corresponds to the family $\boldA = \{\{i\}, [N] \setminus \{i\}\}$, with auxiliary vector $\boldY_{\boldA} = (S_i', \boldS'_{\setminus i})$.
It is defined as
  \begin{align}
    \operatorname{Un}(S_i \to T \mid \boldS_{\setminus i})
    &\triangleq I(S_i'; T \mid \boldS'_{\setminus i}).
    \label{eq:def_unique_multi}
  \end{align}
The $K$-th order synergistic effect, for $K = 2, \ldots, N$, with $\boldC_K \triangleq \binom{[N]}{K}$, is
  \begin{align}
    \operatorname{SE}_K
    &\triangleq H(T \mid \boldY_{\boldC_{K-1}})
    - H(T \mid \boldY_{\boldC_K}).
    \label{eq:def_se_k}
  \end{align}
The narrow synergy --- so named because it captures synergistic information available only from all $N$ sources jointly --- and total synergistic effect are
  \begin{align}
    &\!\operatorname{Syn}(\boldS_{[N]} \!\to\! T)
    \triangleq \operatorname{SE}_N=H(T \!\mid\! \boldY_{\boldC_{N\!-\!1}}\!)
    - H(T \mid \boldS_{[N]}), \text{ and} \nonumber\\
    &\!\operatorname{TSE}(\boldS_{[N]} \to T)
    \triangleq \sum_{K=2}^N \operatorname{SE}_K=H(T \mid \boldY_{\boldC_{1}})
    - H(T \mid \boldS_{[N]}).\nonumber
  \end{align}

For $N \geq 3$, we deliberately do not assign a redundancy. See \cite{lyu2026structural} for the structural reason. Nevertheless, the narrow synergy~$\operatorname{Syn}$ provides a building block from which synergy-based PID frameworks~\cite{ince2017measuring} can reconstruct all PID atoms.
\end{definition}

The synergy spectrum $(\operatorname{SE}_2, \ldots, \operatorname{SE}_N)$ distributes synergy across interaction orders, where each $\operatorname{SE}_K$ captures the information about $T$ that requires source subsets of exactly size $K$ to act jointly, with narrow synergy as the highest-order component and TSE as the all-orders aggregate.
The index $K$ thus denotes both the synergistic-effect order and the cardinality of the source subsets in $\boldC_K$.
 
These quantities are the estimands of our Gaussian estimators in the next section.
The central technical task is to express $h(T \mid \boldY_{\boldA})$ in closed form when $(T, \boldS_{[N]})$ are jointly Gaussian. Section~\ref{sec:gaussian_estimators} carries this out via a single covariance identity for $\Psi_{\boldA} = \Cov(T \mid \boldY_{\boldA})$, from which all four families of quantities above follow as corollaries.

\section{Gaussian Estimators for Multi-Source Information Hierarchies}
\label{sec:gaussian_estimators}
This section presents our main technical results.
We state the main identity (Section~\ref{subsec:main_identity}) and specialize to closed-form measures and complexity analysis (Section~\ref{subsec:corollaries}).

\subsection{The Main Covariance Identity}
\label{subsec:main_identity}

Let $(T, S_1, \ldots, S_N)$ be jointly Gaussian with positive-definite joint covariance, and write $\Sigma_T, \Sigma_{\cA_1 \cA_2}, \Sigma_{\cA T}$ for covariance and cross-covariance blocks indexed by $\cA_1, \cA_2 \subseteq [N]$.
For a family $\boldA = \{\cA_1, \ldots, \cA_m\}$ of source subsets, the conditional-independent construction (Definition~\ref{def:conditional_copy_family}) produces the auxiliary vector $\boldY_{\boldA} = (\boldS_{\cA_1}', \ldots, \boldS_{\cA_m}') \in \R^{D_{\boldA}}$, $D_{\boldA} = \sum_{\cA \in \boldA} d_{\cA}$, where $\boldS_{\cA_1}', \ldots, \boldS_{\cA_m}'$ are conditionally independent given $T$, and each pair $(\boldS_{\cA_a}',T)$ has the same joint distribution as $(\boldS_{\cA_a},T)$.
Let $\Psi_{\boldA} \triangleq \Cov(T \mid \boldY_{\boldA})$.
The next main identity is the computational core of the paper.

\begin{theorem}[Main identity]
  \label{thm:main_identity}
Under the assumptions above, $(T, \boldY_{\boldA})$ is jointly Gaussian, with cross- and joint-covariance 
  \begin{align}
    \Cov(\boldS_{\cA_a}', T)
    &= \Sigma_{\cA_a T}, \text{ for all } a\in[m], \text{ and}
    \label{eq:main_T_block}\\
    \Cov(\boldS_{\cA_a}', \boldS_{\cA_b}')
    &=
    \begin{cases}
      \Sigma_{\cA_a \cA_a}, & a = b,\\
      \Sigma_{\cA_a T}\Sigma_T^{-1}\Sigma_{T \cA_b}, & a \neq b.
    \end{cases}
    \label{eq:main_cross_block}
  \end{align}

  \begin{align}
    \text{Define } \qquad \Lambda_{\boldA}
    &\triangleq
    \begin{bmatrix}
      \Sigma_{\cA_1 T} \\
      \vdots \\
      \Sigma_{\cA_m T}
    \end{bmatrix}
    \in \R^{D_{\boldA} \times d_T},
    \label{eq:Lambda_def}
  \end{align}
  and let $\Gamma_{\boldA} \in \R^{D_{\boldA} \times D_{\boldA}}$ be the block matrix with entries given by~\eqref{eq:main_cross_block}; then the covariance of $(T, \boldY_{\boldA})$ has the block form
\begin{align}
    \Cov\!\begin{pmatrix} T \\ \boldY_{\boldA} \end{pmatrix}
    &= \begin{pmatrix} \Sigma_T & \Lambda_{\boldA}^\top \\ \Lambda_{\boldA} & \Gamma_{\boldA} \end{pmatrix}.
    \label{eq:joint_cov_block}
  \end{align}
    Both $\Gamma_{\boldA} \succ 0$ and $\Psi_{\boldA} \succ 0$, and $\Psi_{\boldA}$ is the Schur complement of $\Gamma_{\boldA}$ in~\eqref{eq:joint_cov_block}, i.e.,
  \begin{align}
    \Psi_{\boldA}
    &= \Sigma_T - \Lambda_{\boldA}^\top \Gamma_{\boldA}^{-1}
    \Lambda_{\boldA}.
    \label{eq:main_Psi}
  \end{align}
\end{theorem}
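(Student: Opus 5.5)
The plan is to build $(T,\boldY_{\boldA})$ explicitly as a linear-Gaussian model and read off all claims from it. Since $(T,\boldS_{\cA_a})$ is jointly Gaussian, the standard regression decomposition gives $\boldS_{\cA_a} = B_a T + E_a$. Here $B_a \triangleq \Sigma_{\cA_a T}\Sigma_T^{-1}$, and $E_a \sim \mathcal{N}(0, \Sigma_{\cA_a\mid T})$ is independent of $T$, with $\Sigma_{\cA_a\mid T} = \Sigma_{\cA_a\cA_a} - \Sigma_{\cA_a T}\Sigma_T^{-1}\Sigma_{T\cA_a}$. (Without loss of generality all means are zero.)

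We then set $\boldS_{\cA_a}' \triangleq B_a T + E_a'$, where $E_1',\ldots,E_m'$ are mutually independent, independent of $T$, and $E_a' \sim \mathcal{N}(0,\Sigma_{\cA_a\mid T})$. Conditioned on $T=t$, the blocks are independent with laws $\mathcal{N}(B_a t, \Sigma_{\cA_a\mid T}) = \Pr(\boldS_{\cA_a}\mid T=t)$. This is exactly the product law required by Definition~\ref{def:conditional_copy_family}. Because that definition specifies the joint law of $(T,\boldY_{\boldA})$ uniquely (the marginal of $T$ together with the conditional product), this construction \emph{is} the conditional-independent family. The densities version of the definition should be read with densities in place of probabilities in the continuous case.

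Joint Gaussianity is now immediate, since $(T,\boldY_{\boldA})$ is a linear image of the independent Gaussian vector $(T,E_1',\ldots,E_m')$. The covariances are direct computations.
\begin{itemize}
\item $\Cov(\boldS_{\cA_a}',T)=B_a\Sigma_T=\Sigma_{\cA_a T}$, which gives \eqref{eq:main_T_block}.
\item For $a\neq b$, $\Cov(\boldS_{\cA_a}',\boldS_{\cA_b}')=B_a\Sigma_T B_b^\top=\Sigma_{\cA_aT}\Sigma_T^{-1}\Sigma_{T\cA_b}$, using independence of $E_a'$ and $E_b'$.
\item For $a=b$, we get $B_a\Sigma_TB_a^\top+\Sigma_{\cA_a\mid T}=\Sigma_{\cA_a\cA_a}$.
\end{itemize}
This yields \eqref{eq:main_cross_block} and hence the block form \eqref{eq:joint_cov_block}.

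Positive definiteness is the only step needing care, because overlapping subsets duplicate coordinates. The joint covariance nonetheless stays nonsingular, because the duplicates carry independent noise. Write $\Gamma_{\boldA} = \Lambda_{\boldA}\Sigma_T^{-1}\Lambda_{\boldA}^\top + \operatorname{blkdiag}(\Sigma_{\cA_1\mid T},\ldots,\Sigma_{\cA_m\mid T})$. Each $\Sigma_{\cA_a\mid T}$ is a Schur complement of a principal submatrix of the positive-definite joint covariance of $(T,\boldS_{[N]})$, so it is $\succ 0$. The block-diagonal term is therefore $\succ 0$, and adding a PSD term keeps $\Gamma_{\boldA}\succ0$.

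For $\Psi_{\boldA}$, the whole joint covariance \eqref{eq:joint_cov_block} equals $L\,\operatorname{blkdiag}(\Sigma_T,\Sigma_{\cA_1\mid T},\ldots)\,L^\top$, where $L$ is unit lower block-triangular with first block column $(I;B_1;\ldots;B_m)$. The joint covariance is thus $\succ0$. Its Schur complement with respect to $\Gamma_{\boldA}$ is then also $\succ0$.

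Finally, the standard Gaussian conditioning formula, as recalled in Section~\ref{subsec:gaussian_entropy_review}, gives $\Cov(T\mid\boldY_{\boldA})=\Sigma_T-\Lambda_{\boldA}^\top\Gamma_{\boldA}^{-1}\Lambda_{\boldA}$, which is \eqref{eq:main_Psi}. The main obstacle I expect is the definitional point, not the algebra: one must make precise that the continuous analogue of Definition~\ref{def:conditional_copy_family}, together with $T$'s marginal, determines the law uniquely, so that the explicit construction may be used.
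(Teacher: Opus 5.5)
Your proposal is correct and follows essentially the same route as the paper's proof. Both use the linear-Gaussian representation $\boldS_{\cA_a}' = B_a T + \xi_{\cA_a}$ with mutually independent residuals, direct covariance computation, the decomposition $\Gamma_{\boldA} = \operatorname{blkdiag}(\Delta_{\cA_a}) + B_{\boldA}\Sigma_T B_{\boldA}^\top \succ 0$, and the Gaussian conditioning formula; the only differences are that you construct the representation and appeal to uniqueness of the law (the paper derives it from the definition), and you obtain $\Psi_{\boldA}\succ 0$ from the unit block-triangular factorization, which the paper gives only as intuition before its explicit Woodbury computation $\Psi_{\boldA} = \big(\Sigma_T^{-1} + B_{\boldA}^\top (D_{\boldA}^{\Delta})^{-1} B_{\boldA}\big)^{-1}$.
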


\begin{IEEEproof}[Proof sketch]
For each $a$, $(T, \boldS_{\cA_a}')$ has the same joint distribution as $(T, \boldS_{\cA_a})$, so each pair is jointly Gaussian. Combined with conditional independence of $\{\boldS_{\cA_a}'\}_a$ given $T$, this yields joint Gaussianity of $(T, \boldY_{\boldA})$.

Equation~\eqref{eq:main_T_block} follows from the same conditional distribution of $(\boldS_{\cA_a}',T)$ and $(\boldS_{\cA_a},T)$.
For~\eqref{eq:main_cross_block}: when $a = b$, $\boldS_{\cA_a}'$ shares the marginal of $\boldS_{\cA_a}$, giving the diagonal block; when $a \neq b$, conditional independence yields $\Cov(\boldS_{\cA_a}', \boldS_{\cA_b}' \mid T) = 0$, so the law of total covariance reduces it to $\Cov(\E[\boldS_{\cA_a} \mid T], \E[\boldS_{\cA_b} \mid T]) = \Sigma_{\cA_a T}\Sigma_T^{-1}\Sigma_{T \cA_b}$.

Equation~\eqref{eq:main_Psi} is the Schur complement of $\Gamma_{\boldA}$ in~\eqref{eq:joint_cov_block}, with positive definiteness of $\Gamma_{\boldA}, \Psi_{\boldA}$ established via the Woodbury identity.
See 
\ifthenelse{\boolean{showAppendix}}
{Appendix~\ref{app:main_identity}}{the full version~\cite{lyu2026closedformfull}}
for a detailed proof.
\end{IEEEproof}

As a special case, when $\boldA = \{\cA\}$ contains a single block, $\Gamma_{\{\cA\}} = \Sigma_{\cA \cA}$, $\Lambda_{\{\cA\}} = \Sigma_{\cA T}$, and $\Psi_{\{\cA\}} = \Sigma_T - \Sigma_{T\cA}\Sigma_{\cA\cA}^{-1}\Sigma_{\cA T} = \Cov(T \mid \boldS_{\cA})$.

\subsection{Closed-Form PID Estimators}
\label{subsec:corollaries}
The corollary below uses the results of Theorem~\ref{thm:main_identity} to write each measure of Definition~\ref{def:measures_quantities} as a log-determinant ratio of covariance matrices.

\begin{corollary}[Closed-form Gaussian PID measures]
  \label{cor:closed_forms}
Under joint Gaussianity, the measures of Definition~\ref{def:measures_quantities} have the following closed forms.
Let $\boldU_i \triangleq \{\{i\}, [N] \setminus \{i\}\}$ and $\boldV_i \triangleq \{[N] \setminus \{i\}\}$, and write $\Sigma_{ii} \triangleq \Cov(S_i)$ for the marginal covariance of $S_i$; the matrices $\Gamma_{\boldU_i}, \Psi_{\boldU_i}, \Psi_{\boldV_i}$ are obtained from Theorem~\ref{thm:main_identity}.

The two-source redundancy ($N = 2$) is
\begin{align}
    \operatorname{Red}(S_1, S_2 \to T)
    &= \tfrac{1}{2}\log
    \tfrac{\det \Sigma_{11} \cdot \det \Sigma_{22}}
    {\det \Gamma_{\boldU_1}}.
    \label{eq:cor_red}
  \end{align}
For $N \geq 2$, the per-source unique information is
  \begin{align}
    \operatorname{Un}(S_i \to T \mid \boldS_{\setminus i})
    &= \tfrac{1}{2}\log
    \tfrac{\det \Psi_{\boldV_i}}
    {\det \Psi_{\boldU_i}},
    \quad i \in [N].
    \label{eq:cor_unique}
  \end{align}
The total synergistic effect counts synergy across all orders,
  \begin{align}
    \operatorname{TSE}(\boldS_{[N]} \to T)
    = \sum_{K=2}^N \operatorname{SE}_K
    &= \tfrac{1}{2}\log
    \tfrac{\det \Psi_{\boldC_1}}
    {\det \Psi_{\boldC_N}},
    \label{eq:cor_tse}
  \end{align} 
where the $K$-th order synergistic effect is
 \begin{align}
    \operatorname{SE}_K
    &= \tfrac{1}{2}\log
    \tfrac{\det \Psi_{\boldC_{K-1}}}
    {\det \Psi_{\boldC_K}},
    \quad K = 2, \ldots, N,
    \label{eq:cor_SE_K}
  \end{align}
and narrow synergy is the $K = N$ specialization,
  \begin{align}
    \operatorname{Syn}(\boldS_{[N]} \to T)
    &= \operatorname{SE}_N = \tfrac{1}{2}\log
    \tfrac{\det \Psi_{\boldC_{N-1}}}
    {\det \Psi_{\boldC_N}}.
    \label{eq:cor_narrow_synergy}
  \end{align}
\end{corollary}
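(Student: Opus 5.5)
The plan is to derive every formula in the corollary from one principle. For any family $\boldA$, Theorem~\ref{thm:main_identity} shows that $(T,\boldY_{\boldA})$ is jointly Gaussian, that $\Gamma_{\boldA}\succ 0$ and $\Psi_{\boldA}\succ 0$, and that $\Psi_{\boldA}=\Cov(T\mid\boldY_{\boldA})$. Hence the Gaussian conditional entropy formula of Section~\ref{subsec:gaussian_entropy_review} gives
\begin{align*}
h(T\mid \boldY_{\boldA})=\tfrac12\log\big((2\pi e)^{d_T}\det\Psi_{\boldA}\big).
\end{align*}
In the continuous setting the entropies $H$ in Definition~\ref{def:measures_quantities} are read as differential entropies $h$. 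Every quantity there is a difference of two such conditional entropies, so the $(2\pi e)^{d_T}$ factors cancel and each quantity becomes $\tfrac12\log$ of a determinant ratio.

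\textbf{Synergy terms.} For \eqref{eq:cor_SE_K}, take $\boldA=\boldC_{K-1}$ and $\boldA=\boldC_K$; subtracting the two entropies gives the stated ratio at once. For \eqref{eq:cor_narrow_synergy} and \eqref{eq:cor_tse}, first identify $\boldY_{\boldC_N}$. Since $\boldC_N=\{[N]\}$ has a single block, $\boldY_{\boldC_N}=\boldS'_{[N]}$. This copy has the same joint law with $T$ as $\boldS_{[N]}$, so $h(T\mid\boldS_{[N]})=h(T\mid \boldY_{\boldC_N})$; this is also the single-block special case noted after the theorem. With that in hand, TSE telescopes: summing $\operatorname{SE}_K$ over $K=2,\dots,N$ gives $\tfrac12\log(\det\Psi_{\boldC_1}/\det\Psi_{\boldC_N})$.

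\textbf{Unique information.} Use the chain rule, $I(S_i';T\mid\boldS'_{\setminus i})=h(T\mid \boldS'_{\setminus i})-h(T\mid S_i',\boldS'_{\setminus i})$. The second term is $h(T\mid\boldY_{\boldU_i})$. For the first term, the marginal law of $(T,\boldS'_{\setminus i})$ equals that of $(T,\boldS_{\setminus i})$, which is exactly the single-block family $\boldV_i$. So $h(T\mid\boldS'_{\setminus i})=h(T\mid\boldY_{\boldV_i})$, and the ratio $\det\Psi_{\boldV_i}/\det\Psi_{\boldU_i}$ follows.

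\textbf{Redundancy.} Here I would not use $\Psi$ but the mutual information formula \eqref{eq:gaussian_mi_logdet} in entropy form:
\begin{align*}
I(S_1';S_2')=h(S_1')+h(S_2')-h(S_1',S_2').
\end{align*}
By Theorem~\ref{thm:main_identity}, $S_1'$ and $S_2'$ have covariances $\Sigma_{11}$ and $\Sigma_{22}$, and the stacked vector $\boldY_{\boldU_1}=(S_1',S_2')$ (for $N=2$, $\boldU_1=\{\{1\},\{2\}\}$) is Gaussian with covariance $\Gamma_{\boldU_1}\succ0$. Writing out the three Gaussian entropies, the $2\pi e$ factors cancel because $d_1+d_2=D_{\boldU_1}$, which yields \eqref{eq:cor_red}.

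\textbf{Main obstacle.} The only delicate point is justifying that the marginals of the auxiliary construction agree with the original variables. Concretely, we need $(T,\boldS'_{\setminus i})\sim(T,\boldS_{\setminus i})$ and $(T,\boldS'_{[N]})\sim(T,\boldS_{[N]})$, so that the $\boldV_i$ and $\boldC_N$ entropies reduce to the original ones. Both follow from Definition~\ref{def:conditional_copy_family}, since each block has the original conditional law given $T$. Once this is in place, the remaining steps are algebraic.
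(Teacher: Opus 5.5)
Your proposal is correct and follows the paper's own derivation. Both write each measure as a difference of $h(T\mid\boldY_{\boldA})=\tfrac12\log((2\pi e)^{d_T}\det\Psi_{\boldA})$ terms, use the single-block families $\boldV_i$ and $\boldC_N=\{[N]\}$ to handle $h(T\mid\boldS'_{\setminus i})$ and $h(T\mid\boldS_{[N]})$, telescope for TSE, and obtain redundancy from the Gaussian log-det mutual-information ratio with joint covariance $\Gamma_{\boldU_1}$. Your explicit check that $(T,\boldS'_{\setminus i})$ and $(T,\boldS'_{[N]})$ keep the original joint laws makes precise a step the paper uses only implicitly.
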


The TSE expression \eqref{eq:cor_tse} arises by telescoping the SE$_K$ ratios over $K = 2, \ldots, N$.
TSE is computable without combinatorial enumeration over $\boldC_K$: it requires only $\Psi_{\boldC_1}$ and $\Psi_{\boldC_N}$, each at the original source dimension $\sum_i d_i$.

\begin{proposition}[Complexity]
  \label{prop:complexity}
Evaluating $\Psi_{\boldA}$ via \eqref{eq:main_Psi} reduces to a Cholesky factorization of $\Gamma_{\boldA}$ at cost $\mathcal{O}(D_{\boldA}^3)$.
Consequently, for scalar sources, the full spectrum costs $\mathcal{O}(\sum_{K=2}^{N} K^3 \binom{N}{K}^3)$ (dominated by $K \approx N/2$), while $\operatorname{TSE}$ \eqref{eq:cor_tse} requires only $D_{\boldC_1} = \sum_i d_i$ at cost $\mathcal{O}((\sum_i d_i)^3)$, and each $\operatorname{Un}(S_i \to T \mid \boldS_{\setminus i})$ \eqref{eq:cor_unique} requires only $D_{\boldU_i} = \sum_i d_i$ at the same cost per source.
\end{proposition}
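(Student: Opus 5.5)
The plan is to read the cost directly off the formula \eqref{eq:main_Psi} from Theorem~\ref{thm:main_identity}, and then add up the block sizes $D_{\boldA}$ for each family the measures require.

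\emph{Cost of a single $\Psi_{\boldA}$.} By Theorem~\ref{thm:main_identity}, $\Gamma_{\boldA}\succ 0$, so it has a Cholesky factorization $\Gamma_{\boldA}=LL^\top$. Computing $L$ costs $\mathcal{O}(D_{\boldA}^3)$. Next, solve $LX=\Lambda_{\boldA}$ by forward substitution, at cost $\mathcal{O}(D_{\boldA}^2 d_T)$. Finally, form $\Psi_{\boldA}=\Sigma_T-X^\top X$ at cost $\mathcal{O}(D_{\boldA} d_T^2)$. The remaining ingredients are cheap by comparison:
\begin{itemize}
\item assembling $\Gamma_{\boldA}$ needs only the diagonal blocks $\Sigma_{\cA\cA}$ and the rank-$d_T$ products $\Sigma_{\cA_a T}\Sigma_T^{-1}\Sigma_{T\cA_b}$, which together cost $\mathcal{O}(D_{\boldA}^2 d_T + d_T^3)$;
\item the final $\log\det\Psi_{\boldA}$ costs $\mathcal{O}(d_T^3)$.
\end{itemize}
Treating $d_T$ as fixed (or at most $D_{\boldA}$), the Cholesky step dominates, giving $\mathcal{O}(D_{\boldA}^3)$.

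\emph{Counting dimensions.} For scalar sources $d_i=1$, each $\cK\in\boldC_K$ has $d_{\cK}=K$, so
\[
D_{\boldC_K}=K\binom{N}{K}.
\]
The full spectrum uses Corollary~\ref{cor:closed_forms}, which needs $\Psi_{\boldC_K}$ for $K=1,\dots,N$. Its total cost is therefore $\sum_K \mathcal{O}(K^3\binom{N}{K}^3)$. The $K=1$ and $K=N$ terms are polynomial in $N$ and are absorbed into the sum, which matches the stated bound. The sum is dominated by $K\approx N/2$ because $\binom{N}{K}$ peaks there, and the cubic power sharpens the peak.

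For TSE, \eqref{eq:cor_tse} needs only two matrices:
\begin{itemize}
\item $\Psi_{\boldC_1}$, with $D_{\boldC_1}=\sum_i d_i$;
\item $\Psi_{\boldC_N}=\Cov(T\mid \boldS_{[N]})$, the single-block special case, with $D=\sum_i d_i$.
\end{itemize}
So TSE costs $\mathcal{O}((\sum_i d_i)^3)$. For unique information, $\boldU_i$ has $D_{\boldU_i}=d_i+\sum_{j\ne i}d_j=\sum_j d_j$, and $\boldV_i$ is smaller. Each $\operatorname{Un}$ therefore costs the same per source.

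The only delicate point is stating what is hidden in the $\mathcal{O}$. Assembly, the triangular solves and the $d_T$-dependent terms must be subdominant, and the $K=1$ term must be charged correctly in the spectrum count. I would state $d_T$ explicitly as a lower-order factor rather than attempt any sharper bound.
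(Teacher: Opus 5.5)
Your proposal is correct and follows the same accounting the paper relies on (the paper gives no separate proof beyond the remark after Corollary~\ref{cor:closed_forms}): the Cholesky factorization of $\Gamma_{\boldA}$ dominates at $\mathcal{O}(D_{\boldA}^3)$, and the rest follows from $D_{\boldC_K}=K\binom{N}{K}$ for scalar sources and $D_{\boldC_1}=D_{\boldC_N}=D_{\boldU_i}=\sum_i d_i$. One small wording fix: only the $K=1$ term lies outside the stated sum and needs absorbing (its cost $\mathcal{O}(N^3)$ is dominated by the $K=2$ term), whereas the $K=N$ term is already part of the sum.
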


TSE and per-source unique information are therefore tractable at scales where the full $K$-spectrum is not.

\section{Structural Properties and Regularization}
\label{sec:properties}

This section establishes the structural properties of the proposed measures in a single combined statement (Theorem~\ref{thm:structural}), and provides a ridge-regularized form (Proposition~\ref{prop:ridge}) for degenerate population cases.

\begin{theorem}[Structural properties]
  \label{thm:structural}
Each Gaussian PID measure $\theta$ defined in Section~\ref{sec:gaussian_estimators} satisfies the following.

\noindent (i) \emph{Blockwise affine invariance.} For invertible matrices $A_T, A_1, \!\ldots,\! A_N$ and shifts $b_T, b_1,\! \ldots,\! b_N$, define $\widetilde{T} = A_T T \!+\! b_T$ and $\widetilde{S}_i \!= \!A_i S_i \!+\! b_i$. Then $\theta(\widetilde{T}, \widetilde{S}_1, \ldots, \!\widetilde{S}_N) = \theta(T, S_1, \ldots,\! S_N)$. The transformations $A_i$ need not have unit determinant.

\noindent (ii) \emph{Source-permutation symmetry.} For any permutation $\pi$ of $[N]$ and $\widetilde{S}_i = S_{\pi(i)}$, $\operatorname{SE}_K$, $\operatorname{Syn}$, and $\operatorname{TSE}$ are invariant; the general unique information is equivariant in the sense $\operatorname{Un}(\widetilde{S}_i \to T \mid \widetilde{\boldS}_{\setminus i}) = \operatorname{Un}(S_{\pi(i)} \to T \mid \boldS_{\setminus \pi(i)})$.

\noindent (iii) \emph{Additivity over independent systems.}
 If $(T^{(a)}, S_1^{(a)}, \ldots, S_N^{(a)})$ and $(T^{(b)}, S_1^{(b)}, \ldots, S_N^{(b)})$ are jointly independent Gaussian systems with source blocks indexed by the same set $[N]$, and we set $T = (T^{(a)}, T^{(b)})$ and $S_i = (S_i^{(a)}, S_i^{(b)})$, then $\theta(T, \boldS_{[N]}) = \theta(T^{(a)}, \boldS_{[N]}^{(a)}) + \theta(T^{(b)}, \boldS_{[N]}^{(b)})$.

\noindent (iv) \emph{Plug-in consistency.} If the population covariance $\Sigma$ of $(T, S_1, \ldots, S_N)$ is positive definite and the empirical covariance $\widehat{\Sigma}_M$ from $M$ i.i.d.\ samples satisfies $\widehat{\Sigma}_M \xrightarrow[M \to \infty]{\mathrm{a.s.}} \Sigma$, then almost surely the plug-in estimator $\theta(\widehat{\Sigma}_M)$ converges to $\theta(\Sigma)$ and is well defined for all large $M$.
\end{theorem}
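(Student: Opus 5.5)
The plan is to reduce all four properties to the behavior of the matrices of Theorem~\ref{thm:main_identity} under simple operations on the joint covariance $\Sigma$. Every measure $\theta$ in Corollary~\ref{cor:closed_forms} is half the log of a ratio of determinants. In each ratio, numerator and denominator are of the same type: both are $d_T\times d_T$ matrices $\Psi_{\boldA}$, or, for Red, $\det\Sigma_{11}\det\Sigma_{22}$ against $\det\Gamma_{\boldU_1}$, both of total source dimension. So it suffices to track how $\Psi_{\boldA}$ and $\Gamma_{\boldA}$ transform.

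\emph{Step 1 (affine invariance).} Shifts do not change covariances, so take $b=0$. Under $T\mapsto A_TT$ and $S_i\mapsto A_iS_i$, the blocks transform as $\Sigma_T\mapsto A_T\Sigma_TA_T^\top$ and $\Sigma_{\cA T}\mapsto A_{\cA}\Sigma_{\cA T}A_T^\top$, where $A_{\cA}=\mathrm{diag}(A_i)_{i\in\cA}$. Substituting into \eqref{eq:main_cross_block} gives $\widetilde\Gamma_{\boldA}=B\Gamma_{\boldA}B^\top$ with $B=\mathrm{diag}(A_{\cA_1},\dots,A_{\cA_m})$; the $\Sigma_T^{-1}$ factor absorbs $A_T$. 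Likewise $\widetilde\Lambda_{\boldA}=B\Lambda_{\boldA}A_T^\top$. Hence
\begin{align*}
\widetilde\Psi_{\boldA}=A_T\Psi_{\boldA}A_T^\top .
\end{align*}
In every ratio of $\Psi$'s the factor $(\det A_T)^2$ cancels. For Red, the factor $(\det A_1\det A_2)^2$ appears in both numerator and denominator and cancels. This is why unit determinant is not needed.

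\emph{Step 2 (permutation symmetry).} The family $\boldC_K$ is mapped to itself by $\pi$. Relabeling sources permutes the blocks of $\boldY_{\boldA}$ and the coordinates inside each block. So $\Gamma_{\boldA}\mapsto P\Gamma_{\boldA}P^\top$ and $\Lambda_{\boldA}\mapsto P\Lambda_{\boldA}$ for a permutation matrix $P$, which leaves $\Psi_{\boldA}$ unchanged. For Un, $\pi$ maps $\boldU_i$ and $\boldV_i$ to $\boldU_{\pi(i)}$ and $\boldV_{\pi(i)}$ up to the same block permutation, which gives the stated equivariance.

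\emph{Step 3 (additivity).} After a coordinate permutation, which is harmless by Step 2's argument, $\Sigma$ is block diagonal across systems $a$ and $b$. Then $\Sigma_T$, $\Lambda_{\boldA}$ and $\Gamma_{\boldA}$ are all block diagonal after reordering coordinates within each auxiliary block. The cross block $\Sigma_{\cA_aT}\Sigma_T^{-1}\Sigma_{T\cA_b}$ splits because $\Sigma_T^{-1}$ is block diagonal. Therefore
\begin{align*}
\Psi_{\boldA}=\mathrm{diag}\bigl(\Psi^{(a)}_{\boldA},\Psi^{(b)}_{\boldA}\bigr),
\end{align*}
determinants factor, and the logs add. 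Red is handled the same way.

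\emph{Step 4 (consistency).} This is the only analytic step and the main obstacle. The map $\Sigma\mapsto\Psi_{\boldA}$ is built from matrix products and inverses of $\Sigma_T$ and $\Gamma_{\boldA}$. By Theorem~\ref{thm:main_identity}, $\Sigma_T$, $\Gamma_{\boldA}$ and $\Psi_{\boldA}$ are positive definite at the population $\Sigma\succ0$. The positive-definite cone is open, and $\Gamma_{\boldA}(\cdot)$ and $\Psi_{\boldA}(\cdot)$ are continuous on the set where $\Sigma_T$ is invertible. So there is a neighborhood of $\Sigma$ on which all the finitely many matrices involved stay positive definite and $\theta$ is continuous. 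Almost sure convergence $\widehat\Sigma_M\to\Sigma$ puts $\widehat\Sigma_M$ in this neighborhood for all large $M$ almost surely. There the estimator is well defined, and the continuous mapping theorem gives $\theta(\widehat\Sigma_M)\to\theta(\Sigma)$ almost surely.

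The point to check carefully is that $\Gamma_{\boldA}(\widehat\Sigma_M)\succ0$ holds eventually. For this I would use continuity together with the strict positivity from Theorem~\ref{thm:main_identity}, rather than re-deriving positivity for the sample covariance.
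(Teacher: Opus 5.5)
Your proposal is correct and follows essentially the same route as the paper's proof. The common steps are: the transformation law $\widetilde\Psi_{\boldA}=A_T\Psi_{\boldA}A_T^\top$, together with the $(\det A_1\det A_2)^2$ cancellation for $\operatorname{Red}$; invariance of $\boldC_K$ and the correspondence $\boldU_i\mapsto\boldU_{\pi(i)}$ under relabeling; block-diagonalization of $\Psi_{\boldA}$ for independent systems; and continuity on the open positive-definite cone combined with the continuous mapping theorem for consistency. Your explicit permutation-matrix bookkeeping ($\Gamma_{\boldA}\mapsto P\Gamma_{\boldA}P^\top$, $\Lambda_{\boldA}\mapsto P\Lambda_{\boldA}$) is only a more careful rendering of the paper's identity $\widetilde{\Psi}_{\widetilde{\boldA}}=\Psi_{\pi(\widetilde{\boldA})}$, not a different argument.
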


\begin{IEEEproof}[Proof sketch]
(i) Each $\Psi_{\boldA}$ transforms as $\widetilde{\Psi}_{\boldA} = A_T \Psi_{\boldA} A_T^\top$ under blockwise affine change, since invertible per-block transforms do not change $\Cov(T \mid \boldY_{\boldA})$ up to conjugation by $A_T$.
The factor $\log\det(A_T A_T^\top)$ then cancels in every log-determinant ratio.
(ii) The families $\boldC_K$ are defined set-theoretically over $[N]$ and are invariant under relabeling, so $\Psi_{\boldC_K}$ is invariant.
(iii) Independence makes all covariance blocks block-diagonal; $\Psi_{\boldA}$ inherits the block-diagonal structure and $\log\det$ splits additively across the two systems.
(iv) $\theta$ is a composition of continuous matrix operations on the open PD cone, so the continuous mapping theorem yields the claim.
Detailed proofs are in 
\ifthenelse{\boolean{showAppendix}}
{Appendix~\ref{app:structural}.}{the full version~\cite{lyu2026closedformfull}.}
\end{IEEEproof}

For numerical settings where the empirical covariance is near-singular, we use a ridge-regularized form.
Let
\begin{align}
  \widehat{\Sigma}_\lambda
  &\triangleq \widehat{\Sigma} + \lambda I,
  \qquad \lambda > 0,
  \label{eq:ridge_def}
\end{align}
and write $\theta(\widehat{\Sigma}_\lambda)$ for the estimator obtained by substituting $\widehat{\Sigma}_\lambda$ for $\Sigma$ in the formulas of Section~\ref{sec:gaussian_estimators}.

\begin{proposition}[Ridge-regularized estimator]
  \label{prop:ridge}
For every $\lambda > 0$ and every $\widehat{\Sigma} \succeq 0$:
  \begin{itemize}
    \item[(a)] $\theta(\widehat{\Sigma}_\lambda)$ is finite and continuous in $\widehat{\Sigma}$;
    \item[(b)] if $\widehat{\Sigma} \succ 0$ and all Schur complements arising in $\theta$ are positive definite, then $\theta(\widehat{\Sigma}_\lambda) \to \theta(\widehat{\Sigma})$ as $\lambda \to 0$.
  \end{itemize}
\end{proposition}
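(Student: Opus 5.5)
The plan is to first show that every matrix built from $\widehat{\Sigma}_\lambda$ along the way is positive definite with eigenvalues bounded below by a positive quantity depending only on $\lambda$ and $\|\widehat{\Sigma}\|$. Then $\theta(\widehat{\Sigma}_\lambda)$ is a composition of continuous maps (matrix inversion, products, determinants, logarithms) on the open PD cone. Since $\widehat{\Sigma} \succeq 0$, we have $\widehat{\Sigma}_\lambda \succeq \lambda I \succ 0$, so every principal block satisfies $\Sigma_{T,\lambda} \succeq \lambda I$ and $\Sigma_{\cA\cA,\lambda} \succeq \lambda I$. In addition, every Schur complement of a principal block of $\widehat{\Sigma}_\lambda$ is $\succeq \lambda I$. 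For example, $\Cov_\lambda(T \mid \boldS_{\cA})$ is the inverse of a diagonal block of $\widehat{\Sigma}_\lambda^{-1}$, and $\widehat{\Sigma}_\lambda^{-1} \preceq \lambda^{-1} I$.

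The core step is to apply Theorem~\ref{thm:main_identity} with $\widehat{\Sigma}_\lambda$ as the covariance of a genuine jointly Gaussian vector $(T,S_1,\ldots,S_N)$. This is legitimate because $\widehat{\Sigma}_\lambda$ is positive definite, so the theorem's hypotheses hold. It yields $\Gamma_{\boldA} \succ 0$ and $\Psi_{\boldA} \succ 0$ for every family $\boldA$ used in Corollary~\ref{cor:closed_forms} ($\boldU_i$, $\boldV_i$, $\boldC_K$). Every determinant in \eqref{eq:cor_red}--\eqref{eq:cor_narrow_synergy} is therefore strictly positive, so each log-ratio is finite. This establishes the finiteness part of (a).

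For continuity in (a), fix $\lambda$ and observe that the map $\widehat{\Sigma} \mapsto \widehat{\Sigma}_\lambda$ is affine, and it sends the closed PSD cone into the open PD cone. On the open PD cone, the entries of $\Gamma_{\boldA}$ and $\Lambda_{\boldA}$ are polynomial in $\Sigma$ and $\Sigma_T^{-1}$. The conditional covariance $\Psi_{\boldA}$ is then a rational function of $\Sigma$ whose denominators ($\det \Sigma_T$, $\det\Gamma_{\boldA}$) are nonzero there. Hence $\theta$ is continuous on the PD cone, and composing with the affine map gives continuity of $\widehat{\Sigma} \mapsto \theta(\widehat{\Sigma}_\lambda)$ on the PSD cone.

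Part (b) follows from the same continuity argument applied at $\widehat{\Sigma}$ itself. If $\widehat{\Sigma} \succ 0$, then by Theorem~\ref{thm:main_identity} all $\Gamma_{\boldA}(\widehat{\Sigma})$ and $\Psi_{\boldA}(\widehat{\Sigma})$ are PD; the additional hypothesis in (b) simply guarantees this explicitly. So $\widehat{\Sigma}$ lies in the open domain where $\theta$ is continuous. Since $\widehat{\Sigma}_\lambda \to \widehat{\Sigma}$ as $\lambda \to 0$, we obtain $\theta(\widehat{\Sigma}_\lambda) \to \theta(\widehat{\Sigma})$; this is the same mechanism as in Theorem~\ref{thm:structural}(iv).

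The main subtlety is that $\Gamma_{\boldA}$ is not a principal submatrix of $\widehat{\Sigma}_\lambda$. Its off-diagonal blocks are built from $\Sigma_T^{-1}$, so positive definiteness cannot be read off directly from $\widehat{\Sigma}_\lambda \succeq \lambda I$. I would resolve this by invoking Theorem~\ref{thm:main_identity} on the regularized covariance, rather than by attempting a direct eigenvalue bound. As a fallback, one can write $\Gamma_{\boldA} = \mathrm{blkdiag}(\Sigma_{\cA_a|T}) + \Lambda_{\boldA}\Sigma_T^{-1}\Lambda_{\boldA}^\top$, where each conditional block satisfies $\Sigma_{\cA_a|T} \succeq \lambda I$ by the Schur-complement bound above. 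This gives $\Gamma_{\boldA} \succeq \lambda I$ directly, which also confirms the uniform lower bound needed for continuity.
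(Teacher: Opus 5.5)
Your proposal is correct and follows essentially the same route as the paper. The bound $\widehat{\Sigma}_\lambda \succeq \lambda I \succ 0$ lets you apply Theorem~\ref{thm:main_identity} to get $\Gamma_{\boldA}, \Psi_{\boldA} \succ 0$ (hence finiteness), continuity in (a) comes from composing continuous matrix operations on the open positive-definite cone with the affine map $\widehat{\Sigma} \mapsto \widehat{\Sigma} + \lambda I$, and (b) follows from continuity of $\theta$ at $\widehat{\Sigma}$ together with $\widehat{\Sigma}_\lambda \to \widehat{\Sigma}$; your extra bound $\Gamma_{\boldA} \succeq \lambda I$ (via the block-diagonal-plus-low-rank decomposition the paper also uses) and your remark that the Schur-complement hypothesis in (b) already follows from $\widehat{\Sigma} \succ 0$ are correct but not needed.
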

\begin{IEEEproof}[Proof sketch]
$\widehat{\Sigma}_\lambda \succ 0$ for $\lambda > 0$ ensures all Schur complements are PD, so $\theta(\widehat{\Sigma}_\lambda)$ is finite; continuity and the $\lambda \to 0$ limit follow from continuity of matrix operations on the PD cone. Detailed proofs are in
\ifthenelse{\boolean{showAppendix}}
{Appendix~\ref{app:ridge}.}{the full version~\cite{lyu2026closedformfull}.}
\end{IEEEproof}

The ridge form covers degenerate population cases (e.g., $T$ a deterministic function of $\boldS$); for typical samples, unregularized plug-in suffices, as confirmed empirically in \ifthenelse{\boolean{showAppendix}}
{Appendix~\ref{app:ridge_exp}.}{the full version~\cite{lyu2026closedformfull}.}

\begin{remark}[Sign of $\operatorname{SE}_K$]
  \label{rem:sign}
Since $\boldY_{\boldC_{K-1}}$ and $\boldY_{\boldC_K}$ are not nested, the standard ``more conditioning cannot increase entropy'' argument does not apply, and $\operatorname{SE}_K$ need not be nonnegative.
Operationally, $\operatorname{SE}_K < 0$ means $\boldY_{\boldC_K}$ carries less information about $T$ than $\boldY_{\boldC_{K-1}}$, despite $\boldC_K$ consisting of larger subsets. 
A sufficient condition is the residual ordering $\Psi_{\boldC_{K-1}} \succeq \Psi_{\boldC_K}$, under which monotonicity of $\log\det$ on the positive-definite cone gives $\operatorname{SE}_K \geq 0$.
\end{remark}

\section{Experiments}
\label{sec:experiments}
We validate effectiveness and efficiency of the proposed methods through the following two experiments; additional ridge-behavior, finite-sample convergence, and two-source estimator-comparison experiments are in 
\ifthenelse{\boolean{showAppendix}}
{Appendices~\ref{app:ridge_exp},~\ref{app:convergence}, and~\ref{app:n2_comparison}.}{the full version~\cite{lyu2026closedformfull}.}
Code is publicly available at \url{https://github.com/LvAobo/gaussian-pid}.

\subsection{Synergy Spectrum Recovery and Subset Localization}
\label{subsec:exp_recovery}

\begin{figure*}[t]
  \centering
  \includegraphics[width=0.85\textwidth, trim={3 2 3 2}, clip]{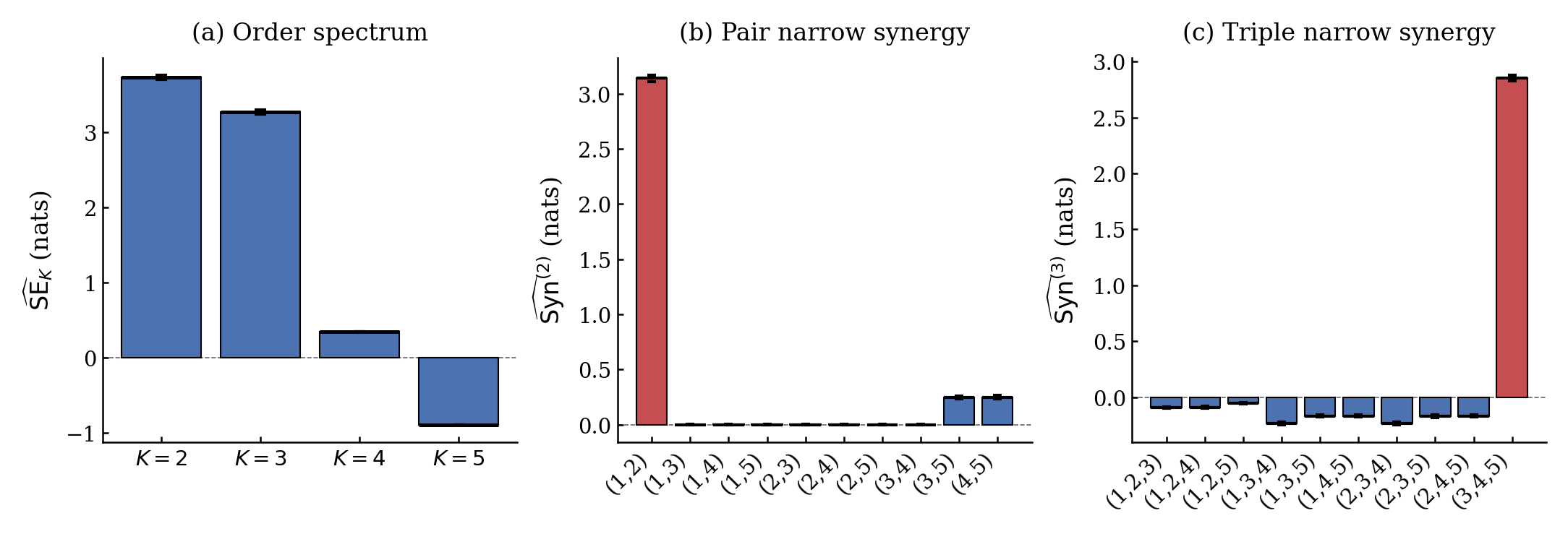}
  \caption{Synergy spectrum and subset narrow synergy on the benchmark in Section~\ref{subsec:exp_recovery}. (a) Global spectrum
    $\widehat{\operatorname{SE}}_K$ for $K = 2, \ldots,
    5$. (b) Two-source
    synergy across all $\binom{5}{2} = 10$ pairs. (c) Three-source
   synergy across all $\binom{5}{3} = 10$ triples. Error
    bars show $\pm 1$ standard deviation across $50$ trials at
  $M = 1000$.}
  \label{fig:exp1}
\end{figure*}

\paragraph{Benchmark}
To verify that the estimator recovers both the dominant interaction order and the source subsets responsible for the synergy, we construct a $5$-source jointly Gaussian system with two independent subsystems contributing synergy at different orders through linear noise cancellation.
Let $T_2, T_3 \sim \mathcal{N}(0, 1)$ be independent variables, $T = (T_2, T_3)$, $U, V_1, V_2 \sim \mathcal{N}(0, 2)$, and i.i.d. $\epsilon_i \sim \mathcal{N}(0, 0.05), i \in [5]$. Define
\begin{align}
  S_1 &= T_2 + U + \epsilon_1, & S_2 &= T_2 - U + \epsilon_2, \notag\\
  S_3 &= T_3 + V_1 + \epsilon_3, & S_4 &= T_3 + V_2 + \epsilon_4, \notag\\
  S_5 &= T_3 - V_1 - V_2 + \epsilon_5.
\end{align}
The designed interaction structure $S_1 + S_2 \perp U$ and $S_3 + S_4 + S_5 \perp V_1, V_2$ produces dominant second- and third-order synergistic effects in the subsets $\{S_1, S_2\}$ and $\{S_3, S_4, S_5\}$.
By Theorem~\ref{thm:structural}(iii), the PID measures split additively across the two subsystems.

\paragraph{Setup}
We compute the plug-in estimator over $50$ independent trials of $M = 1000$ samples; population values are obtained in closed form. Since the order families $\boldC_{K-1}, \boldC_K$ are not nested, $\operatorname{SE}_K$ is a signed effect (Remark~\ref{rem:sign}).
 
\paragraph{Results}
The plug-in mean differs from the population value by less than $2.2 \times 10^{-3}$ for every quantity, with standard deviations under $0.031$ as shown in Figure~\ref{fig:exp1}.
Panel (a) shows the spectrum: $\operatorname{SE}_2 = 3.73$ and $\operatorname{SE}_3 = 3.27$ nats dominate, while $K = 4, 5$ contribute $+0.34$ and $-0.90$ nats; the negative $\operatorname{SE}_5$ illustrates the signed nature of the spectrum (Remark~\ref{rem:sign}). The total is $\operatorname{TSE} = 6.45$ nats.
Panels (b) (c) localize the subsets: the pair $\{S_1, S_2\}$ attains $3.14$ nats ($>10\times$ the next pair) and the triple $\{S_3, S_4, S_5\}$ attains $2.85$ nats (isolated among $10$ triples), confirming both order and subset recovery.

\subsection{Computational Scalability}
\label{subsec:exp_scalability}

\paragraph{Setup}
To verify that the Gaussian estimators for TSE, per-source unique information, and narrow synergy remain tractable at scales where the full $K$-spectrum and discrete baselines become prohibitive, we use scalar sources $d_i = 1$ and $N \in [2, 500]$, drawing $M = 1000$ samples from $\mathcal{N}(0, \Sigma_N)$ where $\Sigma_N = \frac{1}{N} A A^\top + 0.5 I$ and $A$ has i.i.d.\ standard normal entries; the offset bounds the condition number.
We benchmark four Gaussian estimators of this work---full spectrum, $\operatorname{TSE}$, $\operatorname{Un}$, $\operatorname{Syn}$---against discrete baselines (conditional-independent~\cite{lyu2026multivariate}; Iccs~\cite{ince2017measuring}, with each scalar binned into $b=3$ bins giving a $3^{N+1}$ table) and Gaussian O-information~\cite{rosas2019quantifying}.
Although O-information is a scalar redundancy/synergy summary rather than a target-directed PID measure, its Gaussian expression and frequent use in high-dimensional applications make it a natural computational-efficiency baseline; our methods aim to match its scalability while directly measuring synergistic structure.
Wall-clock time is reported as the median over $10$ trials under a $10^3$-second budget per method.
 
\begin{figure*}[t]
  \centering
  \includegraphics[width=0.875\textwidth, trim={3 2 3 2}, clip]{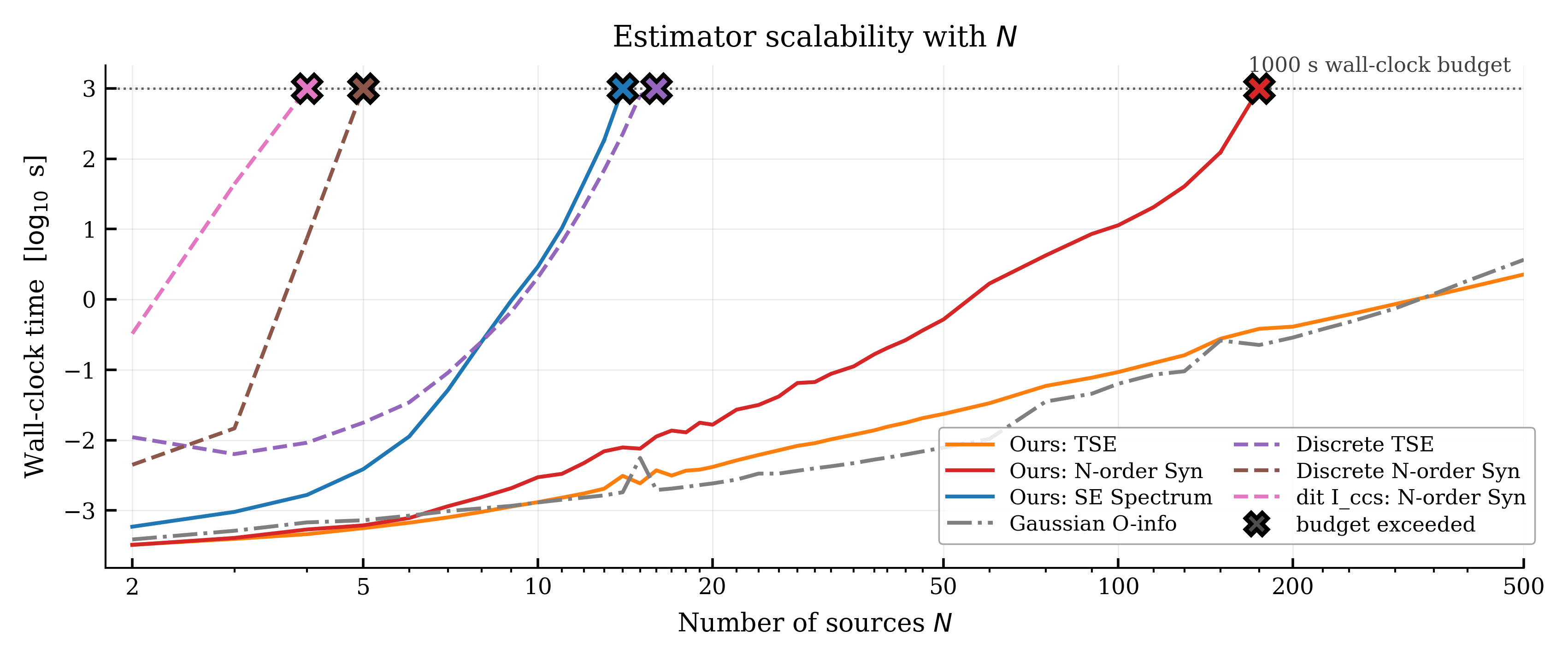}
  \caption{Wall-clock time versus number of sources $N$ on a
  log scale. Solid lines are the Gaussian estimators
  of this work; dashed lines are discrete baselines;
  dash-dot is Gaussian O-information. Median over $10$ trials at $M = 1000$ samples.
  Crosses mark the $N$ at which a method exceeds the
  $10^3$-second budget.}
  \label{fig:exp2}
\end{figure*}
 
\paragraph{Results}
Figure~\ref{fig:exp2} separates the methods into three regimes. Discrete baselines exceed the budget at very small $N$ ($N = 4$ for Iccs, $N = 5$ for discrete narrow synergy, $N = 15$ for discrete TSE). The full Gaussian $K$-spectrum exceeds the budget at $N = 15$, reflecting the combinatorial cost $\mathcal{O}(\sum_K K^3 \binom{N}{K}^3)$. In contrast, $\operatorname{TSE}$, $\operatorname{Un}$, and Gaussian O-information remain under $1$ second at $N = 300$, while $\operatorname{Syn} \in \mathcal{O}(N^6)$ exceeds the budget only near $N = 200$, substantiating Proposition~\ref{prop:complexity}.

\section{Comparison and Conclusion}
\label{sec:comparison_conclusion}
\subsection{Comparison with Existing PID Estimators}
\label{subsec:comparison_existing}
\begin{table}[t]
  \centering
  \caption{Comparison with related continuous PID estimators.}
  \label{tab:comparison_existing_methods}
  \renewcommand{\arraystretch}{1.15}
  \setlength{\tabcolsep}{4pt}
  \footnotesize
  \begin{tabular}{>{\centering\arraybackslash}p{0.355\columnwidth}>{\centering\arraybackslash}p{0.13\columnwidth}>{\centering\arraybackslash}p{0.095\columnwidth}>{\centering\arraybackslash}p{0.1125\columnwidth}>{\centering\arraybackslash}p{0.1525\columnwidth}}
    \hline
    Method & Scope & Closed form & $N=2$ PID & $N\geq 3$ Syn + Spectrum\\
    \hline
    Barrett~\cite{barrett2015exploration}
      & Gaussian & Yes
      & \checkmark & ---\\
    Faes et al.~\cite{faes2017multiscale}
      & Gaussian Process & Yes
      & \checkmark & ---\\
    Kay--Ince 
     \cite{kay2018exact}
      & Gaussian & Yes
      & \checkmark & ---\\
    Niu--Quinn, Kay\cite{niu2019measure,kay2024partial}
      & Gaussian & Almost
      & \checkmark & ---\\
    Venkatesh--Schamberg \cite{venkatesh2022partial}
      & Gaussian & Partial
      & \checkmark & ---\\
    Venkatesh et al. \cite{venkatesh2023gaussian}
      & Gaussian & Plug-in
      & \checkmark & ---\\
    Gurushankar
      et al. \cite{gurushankar2022extracting}
      & Gaussian & Yes
      & Only Un & ---\\
    Pakman et al. \cite{pakman2021estimating}
      &Continuous & Learning
      & \checkmark & ---\\
    Ehrlich
      et al. \cite{ehrlich2024partial}
      &Continuous & Learning
      & \checkmark & ---\\
    O-information \cite{rosas2019quantifying}
      & Gaussian & Yes
      & --- & ---\\
    $\Phi$ID~\cite{mediano2025toward}
      & Gaussian & Yes
      & --- & ---\\
    \textbf{This work}
      & \textbf{Gaussian} & \textbf{Yes}
      & \checkmark & \checkmark\\
    \hline
  \end{tabular}
  \vspace{2pt}

{\scriptsize Almost: analytic + optimization; Partial: subset of atoms; Plug-in: covariance plug-in.\par}
\end{table}

Table~\ref{tab:comparison_existing_methods} positions the proposed estimator within the landscape of Gaussian and continuous PID-related work.
Existing closed-form Gaussian PID estimators are restricted to $N = 2$ source variables \cite{barrett2015exploration,faes2017multiscale,kay2018exact, niu2019measure,kay2024partial,venkatesh2022partial, venkatesh2023gaussian,gurushankar2022extracting}; multivariate Gaussian methods all give a non-target-directed structure ($\Phi$ID) or a scalar (O-information).
Our conditional-independent framework yields a closed-form Gaussian estimator naturally defined for any $N \geq 2$ without optimization or learning, and is the first to provide a synergy spectrum together with an aggregate TSE in closed form.
More broadly, the closed-form multi-source $\operatorname{Syn}$ provides a building block from which all atoms can be reconstructed via lattice inversion~\cite{ince2017measuring,kolchinsky2022novel}, despite known structural inconsistency analyzed in~\cite{lyu2026structural}.

\subsection{Conclusion}
\label{subsec:conclusion}

This paper develops closed-form Gaussian estimators for the conditional-independent PID of \cite{lyu2026multivariate}.
A single covariance identity (Theorem~\ref{thm:main_identity}) yields log-determinant expressions for two-source redundancy, general unique information, total synergistic effect, and $K$-th order synergistic effects across any $N \geq 2$ (Corollary~\ref{cor:closed_forms}).
The estimator is plug-in consistent and satisfies blockwise affine invariance, source-permutation symmetry, and additivity over independent systems (Theorem~\ref{thm:structural}), and admits a regularized form for degenerate cases (Proposition~\ref{prop:ridge}).
Empirically, the estimator recovers the synergy spectrum on a controlled benchmark and remains tractable for $N$ in the hundreds, where discrete plug-in baselines are infeasible.

Two natural extensions remain.
First, the covariance-only structure suggests Gaussian copula transforms~\cite{ince2017statistical} as a route to non-Gaussian continuous data.
Second, a deeper analysis of negative $\operatorname{SE}_K$ (Remark~\ref{rem:sign}), i.e., characterizing when $\Psi_{\boldC_{K-1}} \succeq \Psi_{\boldC_K}$ holds, would clarify the interpretation of $\operatorname{SE}_K$.

\newpage
\IEEEtriggeratref{14}
\bibliographystyle{unsrt}
\bibliography{references}

@article{lyu2026multivariate,
  title={Multivariate partial information decomposition: Constructions, inconsistencies, and alternative measures},
  author={Lyu, Aobo and Clark, Andrew and Raviv, Netanel},
  journal={Physical Review E},
  volume={113},
  number={3},
  pages={034102},
  year={2026},
  publisher={APS}
}

@article{lyu2026structural,
  title={Structural Impossibility of Antichain-Lattice Partial Information Decomposition},
  author={Lyu, Aobo and Clark, Andrew and Raviv, Netanel},
  journal={arXiv preprint arXiv:2604.03869},
  year={2026}
}

@book{cover1999elements,
  title={Elements of information theory},
  author={Cover, Thomas M},
  year={1999},
  publisher={John Wiley \& Sons}
}

@inproceedings{lyu2024explicit,
  title={Explicit formula for partial information decomposition},
  author={Lyu, Aobo and Clark, Andrew and Raviv, Netanel},
  booktitle={2024 IEEE International Symposium on Information Theory (ISIT)},
  pages={2329--2334},
  year={2024},
  organization={IEEE}
}

@article{williams2010nonnegative,
  title={Nonnegative decomposition of multivariate information},
  author={Williams, Paul L and Beer, Randall D},
  journal={arXiv preprint arXiv:1004.2515},
  year={2010}
}

@article{ince2017measuring,
  title={Measuring multivariate redundant information with pointwise common change in surprisal},
  author={Ince, Robin AA},
  journal={Entropy},
  volume={19},
  number={7},
  pages={318},
  year={2017},
  publisher={MDPI}
}

@article{kolchinsky2022novel,
  title={A novel approach to the partial information decomposition},
  author={Kolchinsky, Artemy},
  journal={Entropy},
  volume={24},
  number={3},
  pages={403},
  year={2022},
  publisher={MDPI}
}

@article{barrett2015exploration,
  title={Exploration of synergistic and redundant information sharing in static and dynamical Gaussian systems},
  author={Barrett, Adam B},
  journal={Physical Review E},
  volume={91},
  number={5},
  pages={052802},
  year={2015},
  publisher={APS}
}

@article{faes2017multiscale,
  title={Multiscale information decomposition: Exact computation for multivariate Gaussian processes},
  author={Faes, Luca and Marinazzo, Daniele and Stramaglia, Sebastiano},
  journal={Entropy},
  volume={19},
  number={8},
  pages={408},
  year={2017},
  publisher={MDPI}
}

@article{kay2018exact,
  title={Exact partial information decompositions for Gaussian systems based on dependency constraints},
  author={Kay, Jim W and Ince, Robin AA},
  journal={Entropy},
  volume={20},
  number={4},
  pages={240},
  year={2018},
  publisher={MDPI}
}

@inproceedings{niu2019measure,
  title={A measure of synergy, redundancy, and unique information using information geometry},
  author={Niu, Xueyan and Quinn, Christopher J},
  booktitle={2019 IEEE International Symposium on Information Theory (ISIT)},
  pages={3127--3131},
  year={2019},
  organization={IEEE}
}

@article{kay2024partial,
  title={A Partial Information Decomposition for Multivariate Gaussian Systems Based on Information Geometry},
  author={Kay, Jim W},
  journal={Entropy},
  volume={26},
  number={7},
  pages={542},
  year={2024},
  publisher={MDPI}
}

@inproceedings{venkatesh2022partial,
  title={Partial information decomposition via deficiency for multivariate gaussians},
  author={Venkatesh, Praveen and Schamberg, Gabriel},
  booktitle={2022 IEEE International Symposium on Information Theory (ISIT)},
  pages={2892--2897},
  year={2022},
  organization={IEEE}
}

@article{venkatesh2023gaussian,
  title={Gaussian partial information decomposition: Bias correction and application to high-dimensional data},
  author={Venkatesh, Praveen and Bennett, Corbett and Gale, Sam and Ramirez, Tamina and Heller, Greggory and Durand, Severine and Olsen, Shawn and Mihalas, Stefan},
  journal={Advances in Neural Information Processing Systems},
  volume={36},
  pages={74602--74635},
  year={2023}
}

@inproceedings{gurushankar2022extracting,
  title={Extracting unique information through markov relations},
  author={Gurushankar, Keerthana and Venkatesh, Praveen and Grover, Pulkit},
  booktitle={2022 58th Annual Allerton Conference on Communication, Control, and Computing (Allerton)},
  pages={1--6},
  year={2022},
  organization={IEEE}
}

@article{pakman2021estimating,
  title={Estimating the unique information of continuous variables},
  author={Pakman, Ari and Nejatbakhsh, Amin and Gilboa, Dar and Makkeh, Abdullah and Mazzucato, Luca and Wibral, Michael and Schneidman, Elad},
  journal={Advances in neural information processing systems},
  volume={34},
  pages={20295--20307},
  year={2021}
}

@article{schick2021partial,
  title={A partial information decomposition for discrete and continuous variables},
  author={Schick-Poland, Kyle and Makkeh, Abdullah and Gutknecht, Aaron J and Wollstadt, Patricia and Sturm, Anja and Wibral, Michael},
  journal={arXiv preprint arXiv:2106.12393},
  year={2021}
}

@article{ehrlich2024partial,
  title={Partial information decomposition for continuous variables based on shared exclusions: Analytical formulation and estimation},
  author={Ehrlich, David A and Schick-Poland, Kyle and Makkeh, Abdullah and Lanfermann, Felix and Wollstadt, Patricia and Wibral, Michael},
  journal={Physical Review E},
  volume={110},
  number={1},
  pages={014115},
  year={2024},
  publisher={APS}
}

@article{bara2025partial,
  title={Partial information decomposition for discrete target and continuous source random variables},
  author={Bar{\`a}, Chiara and Antonacci, Yuri and Iovino, Marta and Lazic, Ivan and Faes, Luca},
  journal={Physical Review E},
  volume={112},
  number={1},
  pages={L012301},
  year={2025},
  publisher={APS}
}

@article{rosas2019quantifying,
  title={Quantifying high-order interdependencies via multivariate extensions of the mutual information},
  author={Rosas, Fernando E and Mediano, Pedro AM and Gastpar, Michael and Jensen, Henrik J},
  journal={Physical Review E},
  volume={100},
  number={3},
  pages={032305},
  year={2019},
  publisher={APS}
}

@article{mediano2025toward,
  title={Toward a unified taxonomy of information dynamics via integrated information decomposition},
  author={Mediano, Pedro AM and Rosas, Fernando E and Luppi, Andrea I and Carhart-Harris, Robin L and Bor, Daniel and Seth, Anil K and Barrett, Adam B},
  journal={Proceedings of the National Academy of Sciences},
  volume={122},
  number={39},
  pages={e2423297122},
  year={2025},
  publisher={National Academy of Sciences}
}

@article{schneidman2003synergy,
  title={Synergy, redundancy, and independence in population codes},
  author={Schneidman, Elad and Bialek, William and Berry, Michael J},
  journal={Journal of Neuroscience},
  volume={23},
  number={37},
  pages={11539--11553},
  year={2003},
  publisher={Society for Neuroscience}
}

@article{luppi2022synergistic,
  title={A synergistic core for human brain evolution and cognition},
  author={Luppi, Andrea I and Mediano, Pedro AM and Rosas, Fernando E and Holland, Negin and Fryer, Tim D and O’Brien, John T and Rowe, James B and Menon, David K and Bor, Daniel and Stamatakis, Emmanuel A},
  journal={Nature neuroscience},
  volume={25},
  number={6},
  pages={771--782},
  year={2022},
  publisher={Nature Publishing Group US New York}
}

@article{gatica2021high,
  title={High-order interdependencies in the aging brain},
  author={Gatica, Marilyn and Cofr{\'e}, Rodrigo and Mediano, Pedro AM and Rosas, Fernando E and Orio, Patricio and Diez, Ibai and Swinnen, Stephan P and Cortes, Jesus M},
  journal={Brain connectivity},
  volume={11},
  number={9},
  pages={734--744},
  year={2021},
  publisher={SAGE Publications Sage CA: Los Angeles, CA}
}

@article{liang2023quantifying,
  title={Quantifying \& modeling multimodal interactions: An information decomposition framework},
  author={Liang, Paul Pu and Cheng, Yun and Fan, Xiang and Ling, Chun Kai and Nie, Suzanne and Chen, Richard and Deng, Zihao and Allen, Nicholas and Auerbach, Randy and Mahmood, Faisal and others},
  journal={Advances in Neural Information Processing Systems},
  volume={36},
  pages={27351--27393},
  year={2023}
}

@inproceedings{dutta2020information,
  title={An information-theoretic quantification of discrimination with exempt features},
  author={Dutta, Sanghamitra and Venkatesh, Praveen and Mardziel, Piotr and Datta, Anupam and Grover, Pulkit},
  booktitle={Proceedings of the AAAI Conference on Artificial Intelligence},
  volume={34},
  number={04},
  pages={3825--3833},
  year={2020}
}

@article{ince2017statistical,
  title={A statistical framework for neuroimaging data analysis based on mutual information estimated via a gaussian copula},
  author={Ince, Robin AA and Giordano, Bruno L and Kayser, Christoph and Rousselet, Guillaume A and Gross, Joachim and Schyns, Philippe G},
  journal={Human brain mapping},
  volume={38},
  number={3},
  pages={1541--1573},
  year={2017},
  publisher={Wiley Online Library}
}

@misc{lyu2026closedformfull,
  title={Closed-Form Gaussian Estimators for Multi-Source Partial Information Decomposition},
  author={Lyu, Aobo and Clark, Andrew and Raviv, Netanel},
  year={2026},
  howpublished={\url{https://wustl.box.com/s/d7u2ukuli7sojtue06vqzaup6ful4ilh}},
  note={Preprint; arXiv version forthcoming},
}

\ifthenelse{\boolean{showAppendix}}
{
\newpage
\appendices
\section{Proof of Theorem~\ref{thm:main_identity}}
  \label{app:main_identity}
 
We prove the claims of Theorem~\ref{thm:main_identity}: 
(i) joint Gaussianity of $(T, \boldY_{\boldA})$,
(ii) the cross-covariance blocks~\eqref{eq:main_T_block}--\eqref{eq:main_cross_block},
(iii) the Schur complement form~\eqref{eq:main_Psi}, and
(iv) positive definiteness of $\Gamma_{\boldA}$ and $\Psi_{\boldA}$.
 
Throughout this proof, $\Pr(\cdot \mid T = t)$ in Definition~\ref{def:conditional_copy_family} is understood as the conditional distribution of the relevant random vector given $T = t$; in the Gaussian setting considered here, this is the standard Gaussian conditional density.
 
  \paragraph{Linear-Gaussian representation.}
By Definition~\ref{def:conditional_copy_family}, each pair $(\boldS_{\cA_a}', T)$ shares the joint Gaussian law of $(\boldS_{\cA_a}, T)$. The conditional distribution of $\boldS_{\cA_a}'$ given $T$ is therefore Gaussian, with mean $\mu_{\cA_a} + B_a (T - \mu_T)$ and covariance $\Delta_{\cA_a}$, where
  \begin{align}
    B_a &\triangleq \Sigma_{\cA_a T}\Sigma_T^{-1},
    \quad
    \Delta_{\cA_a} \triangleq \Sigma_{\cA_a \cA_a} - \Sigma_{\cA_a T}\Sigma_T^{-1}\Sigma_{T \cA_a}.
    \label{eq:cond_mean_app}
  \end{align}
Equivalently, each block admits the linear-Gaussian representation
  \begin{align}
    \boldS_{\cA_a}' &= \mu_{\cA_a} + B_a (T - \mu_T) + \xi_{\cA_a},
    \quad \xi_{\cA_a} \sim \mathcal{N}(0, \Delta_{\cA_a}),
    \label{eq:gen_repr_block}
  \end{align}
where $\xi_{\cA_a}$ is independent of $T$.
By Definition~\ref{def:conditional_copy_family}, distinct blocks are conditionally independent given $T$. Moreover, after subtracting the conditional mean, each residual $\xi_{\cA_a}$ has Gaussian conditional distribution $\mathcal{N}(0, \Delta_{\cA_a})$ given $T = t$, with neither parameter depending on $t$. Hence $\xi_{\cA_a}$ is independent of $T$, and combined with conditional mutual independence given $T$, the residuals $\{\xi_{\cA_a}\}_{a=1}^m$ are mutually independent.
 
  \paragraph{Joint Gaussianity.}
By~\eqref{eq:gen_repr_block}, $\boldY_{\boldA}$ is an affine function of $T$ and the mutually independent Gaussian residuals $\{\xi_{\cA_a}\}$, so $(T, \boldY_{\boldA})$ is jointly Gaussian.
 
  \paragraph{Cross-covariance blocks.}
Equation~\eqref{eq:main_T_block} follows from the shared joint law of $(\boldS_{\cA_a}', T)$ and $(\boldS_{\cA_a}, T)$.
For~\eqref{eq:main_cross_block}: when $a = b$, $\boldS_{\cA_a}'$ shares the marginal of $\boldS_{\cA_a}$, giving $\Sigma_{\cA_a \cA_a}$.
When $a \neq b$, conditional independence yields $\Cov(\boldS_{\cA_a}', \boldS_{\cA_b}' \mid T) = 0$, so by the law of total covariance,
  \begin{align}
    \Cov(\boldS_{\cA_a}', \boldS_{\cA_b}')
    &= \Cov\big(\E[\boldS_{\cA_a}' \mid T], \E[\boldS_{\cA_b}' \mid T]\big) \notag\\
    &= B_a \Sigma_T B_b^\top
    = \Sigma_{\cA_a T}\Sigma_T^{-1}\Sigma_{T \cA_b}.
  \end{align}
 
  \paragraph{Auxiliary covariance decomposition and $\Gamma_{\boldA} \succ 0$.}
Stacking the representations~\eqref{eq:gen_repr_block} across $a$ and using the mutual independence of $\{\xi_{\cA_a}\}$, we obtain
  \begin{align}
    \Gamma_{\boldA} &= D_{\boldA}^{\Delta} + B_{\boldA}\Sigma_T B_{\boldA}^\top,
    \label{eq:Gamma_decomp}
  \end{align}
where $D_{\boldA}^{\Delta} \triangleq \operatorname{diag}(\Delta_{\cA_1}, \ldots, \Delta_{\cA_m})$ collects the within-block conditional covariances and $B_{\boldA}$ stacks the regression matrices $B_a$.
 
The second summand $B_{\boldA}\Sigma_T B_{\boldA}^\top$ in~\eqref{eq:Gamma_decomp} is positive semi-definite, so $\Gamma_{\boldA} \succ 0$ holds whenever $D_{\boldA}^{\Delta} \succ 0$, equivalently $\Delta_{\cA_a} \succ 0$ for each $a$.
The covariance of $(T, \boldS_{\cA_a})$ is a principal submatrix of the positive-definite joint covariance of $(T, S_1, \ldots, S_N)$, hence positive definite, and its Schur complement with respect to $\Sigma_T$ is exactly $\Delta_{\cA_a}$ by~\eqref{eq:cond_mean_app}.
Therefore $\Delta_{\cA_a} \succ 0$ for all $a$, $D_{\boldA}^{\Delta} \succ 0$, and $\Gamma_{\boldA} \succ 0$.
 
  \paragraph{Schur complement form.}
The joint covariance of $(T, \boldY_{\boldA})$ is~\eqref{eq:joint_cov_block}, in which $\Sigma_T \succ 0$ by assumption and $\Gamma_{\boldA} \succ 0$ by the previous paragraph. The Schur complement with respect to $\Gamma_{\boldA}$ yields $\Cov(T \mid \boldY_{\boldA}) = \Sigma_T - \Lambda_{\boldA}^\top \Gamma_{\boldA}^{-1} \Lambda_{\boldA}$, proving~\eqref{eq:main_Psi}.
 
  \paragraph{$\Psi_{\boldA} \succ 0$.}
Intuitively, the map $(T, \xi_{\cA_1}, \ldots, \xi_{\cA_m}) \mapsto (T, \boldY_{\boldA})$ is invertible block-triangular, so the joint covariance of $(T, \boldY_{\boldA})$ is positive definite, hence so is its Schur complement $\Psi_{\boldA}$. Explicitly, substituting~\eqref{eq:Gamma_decomp} into~\eqref{eq:main_Psi} and using $\Lambda_{\boldA} = B_{\boldA}\Sigma_T$, the Woodbury matrix identity gives
  \begin{align}
    \Psi_{\boldA}
    &= \Sigma_T - \Sigma_T B_{\boldA}^\top \big(D_{\boldA}^{\Delta} + B_{\boldA}\Sigma_T B_{\boldA}^\top\big)^{-1} B_{\boldA}\Sigma_T \notag\\
    &= \big(\Sigma_T^{-1} + B_{\boldA}^\top (D_{\boldA}^{\Delta})^{-1} B_{\boldA}\big)^{-1}.
  \end{align}
Since $\Sigma_T^{-1} \succ 0$ and $B_{\boldA}^\top (D_{\boldA}^{\Delta})^{-1} B_{\boldA} \succeq 0$, the inverse on the right is positive definite, so $\Psi_{\boldA} \succ 0$.
 
\section{Derivations of Corollary~\ref{cor:closed_forms}}
  \label{app:corollaries}
 
For jointly Gaussian $(T, \boldY_{\boldA})$ with $\Psi_{\boldA} \succ 0$,
  \begin{align}
    h(T \mid \boldY_{\boldA})
    &= \tfrac{1}{2}\log\big((2\pi e)^{d_T} \det \Psi_{\boldA}\big);
    \label{eq:app_cond_entropy}
  \end{align}
the constant $\tfrac{1}{2}\log\big((2\pi e)^{d_T}\big)$ cancels in every entropy difference below.
 
  \paragraph{Two-source redundancy \eqref{eq:cor_red}.}
By Definition~\ref{def:measures_quantities}, $\operatorname{Red}(S_1, S_2 \to T) = I(S_1'; S_2')$.
Under joint Gaussianity, $I(S_1'; S_2') = \tfrac{1}{2}\log\frac{\det \Sigma_{11} \det \Sigma_{22}}{\det \operatorname{Cov}((S_1', S_2'))}$, where $\operatorname{Cov}((S_1', S_2'))$ denotes the joint covariance of the stacked vector. Theorem~\ref{thm:main_identity} with $\boldA = \boldU_1 = \{\{1\}, \{2\}\}$ identifies this joint covariance as $\Gamma_{\boldU_1}$, yielding \eqref{eq:cor_red}.
 
  \paragraph{General unique information \eqref{eq:cor_unique}.}
By Definition~\ref{def:measures_quantities}, $\operatorname{Un}(S_i \to T \mid \boldS_{\setminus i}) = h(T \mid \boldS'_{\setminus i}) - h(T \mid S_i', \boldS'_{\setminus i})$.
The first term has conditional covariance $\Psi_{\boldV_i}$ via the single-block family case of Theorem~\ref{thm:main_identity}; the second has $\Psi_{\boldU_i}$ via Theorem~\ref{thm:main_identity}.
Substituting into \eqref{eq:app_cond_entropy} yields \eqref{eq:cor_unique}.
 
  \paragraph{Total synergistic effect \eqref{eq:cor_tse}.}
Summing \eqref{eq:cor_SE_K} over $K = 2, \ldots, N$ telescopes:
  \begin{align}
    \sum_{K=2}^{N}
    \tfrac{1}{2}\log\frac{\det \Psi_{\boldC_{K-1}}}
    {\det \Psi_{\boldC_K}}
    &= \tfrac{1}{2}\log\frac{\det \Psi_{\boldC_1}}{\det \Psi_{\boldC_N}}.
  \end{align}
 
  \paragraph{$K$-th order synergistic effect \eqref{eq:cor_SE_K}.}
By Definition~\ref{def:measures_quantities}, $\operatorname{SE}_K = h(T \mid \boldY_{\boldC_{K-1}}) - h(T \mid \boldY_{\boldC_K})$.
Substituting \eqref{eq:app_cond_entropy} for both terms and applying Theorem~\ref{thm:main_identity} to compute each $\Psi_{\boldC_K}$ gives \eqref{eq:cor_SE_K}.
 
  \paragraph{Narrow synergy.}
Specializing \eqref{eq:cor_SE_K} at $K = N$ gives $\operatorname{Syn} = \operatorname{SE}_N = \tfrac{1}{2}\log\frac{\det \Psi_{\boldC_{N-1}}}{\det \Psi_{\boldC_N}}$, where $\boldC_N = \{[N]\}$ is a single-block family with $\Psi_{\boldC_N} = \Cov(T \mid \boldS_{[N]})$.
 
  \section{Proof of Theorem~\ref{thm:structural}}
  \label{app:structural}
 
  \paragraph{(i) Blockwise affine invariance.}
Let $\widetilde{T} = A_T T + b_T$ and $\widetilde{S}_i = A_i S_i + b_i$ with each $A_T, A_i$ invertible.
Shifts do not affect covariances, so we may take $b_T = 0$ and $b_i = 0$.
 
For any subset $\cA \subseteq [N]$, write $A_{\cA} = \operatorname{diag}((A_i)_{i \in \cA})$.
Then $\widetilde{\Sigma}_T = A_T \Sigma_T A_T^\top$, $\widetilde{\Sigma}_{\cA \cA} = A_{\cA} \Sigma_{\cA \cA} A_{\cA}^\top$, and $\widetilde{\Sigma}_{\cA T} = A_{\cA} \Sigma_{\cA T} A_T^\top$.
Substituting into the conditional-independent construction, the auxiliary vector for the transformed system has the same distribution as $A_{\cA_a} \boldS_{\cA_a}'$ blockwise, so $\widetilde{\Lambda}_{\boldA} = (\bigoplus_a A_{\cA_a}) \Lambda_{\boldA} A_T^\top$ and $\widetilde{\Gamma}_{\boldA} = (\bigoplus_a A_{\cA_a}) \Gamma_{\boldA} (\bigoplus_a A_{\cA_a})^\top$, where $\bigoplus_a A_{\cA_a} \triangleq \operatorname{diag}(A_{\cA_1}, \ldots, A_{\cA_m})$ denotes the block-diagonal matrix stacking the source-side transforms.
A direct computation using the Schur complement form \eqref{eq:main_Psi} yields
  \begin{align}
    \widetilde{\Psi}_{\boldA} &= A_T \Psi_{\boldA} A_T^\top.
    \label{eq:Psi_affine_transform}
  \end{align}
Note that the source-side determinants $\det(A_i)$ do not appear in this transformation: the source-side factors $\bigoplus_a A_{\cA_a}$ enter $\widetilde{\Lambda}_{\boldA}$ and $\widetilde{\Gamma}_{\boldA}$ in matched positions and cancel in the Schur complement structure of $\widetilde{\Psi}_{\boldA}$.
For any two families $\boldA_1, \boldA_2$,
  \begin{align}
    &\tfrac{1}{2}\log
    \det\big(A_T \Psi_{\boldA_1} A_T^\top\big)
    - \tfrac{1}{2}\log
    \det\big(A_T \Psi_{\boldA_2} A_T^\top\big)
    \notag\\
    &\quad = \tfrac{1}{2}\log
    \det \Psi_{\boldA_1}
    - \tfrac{1}{2}\log
    \det \Psi_{\boldA_2},
  \end{align}
since $\log\det(A_T A_T^\top)$ cancels.
Each $\theta$ in Section~\ref{sec:gaussian_estimators} that is a difference of $\log\det\Psi_{\boldA}$ terms (unique information, $\operatorname{SE}_K$, narrow synergy, TSE) therefore satisfies $\theta(\widetilde{T}, \widetilde{\boldS}_{[N]}) = \theta(T, \boldS_{[N]})$.
 
For two-source redundancy ($N = 2$), $\widetilde{\Sigma}_{11} = A_1 \Sigma_{11} A_1^\top$ and $\widetilde{\Sigma}_{22} = A_2 \Sigma_{22} A_2^\top$. The off-diagonal block of $\Gamma_{\boldU_1}$ transforms as $\widetilde{\Sigma}_{1T}\widetilde{\Sigma}_T^{-1}\widetilde{\Sigma}_{T2} = A_1 \Sigma_{1T} \Sigma_T^{-1} \Sigma_{T2} A_2^\top$ since $A_T$ cancels, giving $\widetilde{\Gamma}_{\boldU_1} = \operatorname{diag}(A_1, A_2)\Gamma_{\boldU_1}\operatorname{diag}(A_1, A_2)^\top$. The numerator $\det\widetilde{\Sigma}_{11}\det\widetilde{\Sigma}_{22}$ and the denominator $\det\widetilde{\Gamma}_{\boldU_1}$ in~\eqref{eq:cor_red} both pick up $\det(A_1)^2\det(A_2)^2$, which cancels in the ratio.
 
  \paragraph{(ii) Source-permutation symmetry/equivariance.}
Let $\pi$ be a permutation of $[N]$ and set $\widetilde{S}_i = S_{\pi(i)}$.
Under this relabeling, any subset family $\widetilde{\boldA}$ in the relabeled system corresponds to the original family $\pi(\widetilde{\boldA}) = \{\pi(\cA) : \cA \in \widetilde{\boldA}\}$ in the original sources, and $\widetilde{\Psi}_{\widetilde{\boldA}} = \Psi_{\pi(\widetilde{\boldA})}$.
 
Since $\boldC_K = \binom{[N]}{K}$ is the family of all $K$-subsets, $\pi(\boldC_K) = \boldC_K$, so $\widetilde{\Psi}_{\boldC_K} = \Psi_{\boldC_K}$, and $\operatorname{SE}_K$, $\operatorname{Syn}$, $\operatorname{TSE}$ are invariant.
 
For unique information, the relabeled family $\widetilde{\boldU}_i = \{\{i\}, [N] \setminus \{i\}\}$ corresponds in the original sources to $\pi(\widetilde{\boldU}_i) = \{\{\pi(i)\}, [N] \setminus \{\pi(i)\}\} = \boldU_{\pi(i)}$.
Hence $\operatorname{Un}(\widetilde{S}_i \to T \mid \widetilde{\boldS}_{\setminus i}) = \operatorname{Un}(S_{\pi(i)} \to T \mid \boldS_{\setminus \pi(i)})$.
 
Two-source redundancy is symmetric in $(S_1, S_2)$ directly from~\eqref{eq:cor_red}: swapping $S_1 \leftrightarrow S_2$ exchanges $\det\Sigma_{11} \leftrightarrow \det\Sigma_{22}$ in the numerator (whose product is symmetric) and applies a conformal permutation to $\Gamma_{\boldU_1}$ that leaves $\det\Gamma_{\boldU_1}$ invariant.
 
  \paragraph{(iii) Additivity over independent systems.}
Independence of $(T^{(a)}, \boldS_{[N]}^{(a)})$ and $(T^{(b)}, \boldS_{[N]}^{(b)})$ implies that all cross-system covariance blocks vanish.
Hence, after a conformal permutation grouping all system-$(a)$ coordinates before all system-$(b)$ coordinates, the joint covariance of $(T, \boldS_{[N]})$ with $T = (T^{(a)}, T^{(b)})$ and $S_i = (S_i^{(a)}, S_i^{(b)})$ is block diagonal:
  \begin{align}
    \Sigma_T &=
    \begin{bmatrix}
      \Sigma_T^{(a)} & 0 \\
      0 & \Sigma_T^{(b)}
    \end{bmatrix},
    \quad
    \Sigma_{\cA \cB} =
    \begin{bmatrix}
      \Sigma_{\cA \cB}^{(a)} & 0 \\
      0 & \Sigma_{\cA \cB}^{(b)}
    \end{bmatrix},
  \end{align}
and similarly for $\Sigma_{\cA T}$.
Substituting these block diagonals into the main identity \eqref{eq:main_Psi} of Theorem~\ref{thm:main_identity} gives
  \begin{align}
    \Psi_{\boldA} &=
    \begin{bmatrix}
      \Psi_{\boldA}^{(a)} & 0 \\
      0 & \Psi_{\boldA}^{(b)}
    \end{bmatrix},
  \end{align}
so $\det \Psi_{\boldA} = \det \Psi_{\boldA}^{(a)} \cdot \det \Psi_{\boldA}^{(b)}$ and $\log\det \Psi_{\boldA} = \log\det \Psi_{\boldA}^{(a)} + \log\det \Psi_{\boldA}^{(b)}$.
Each $\theta$ expressible as a difference of $\log\det\Psi_{\boldA}$ terms therefore splits additively across the two systems. For two-source redundancy, the same block-diagonal argument applies to $\Sigma_{11}$, $\Sigma_{22}$, and $\Gamma_{\boldU_1}$, and the log-determinant ratio~\eqref{eq:cor_red} splits additively.
 
  \paragraph{(iv) Plug-in consistency.}
Each Gaussian PID measure $\theta(\Sigma)$ is a composition of continuous operations on blocks of $\Sigma$ (selection of principal sub-blocks, matrix multiplication, matrix inversion, Schur complementation, determinant, and logarithm), each continuous on the open positive-definite cone.
Under $\Sigma \succ 0$, Theorem~\ref{thm:main_identity} ensures $\Gamma_{\boldA}, \Psi_{\boldA} \succ 0$ for every family $\boldA$ involved, so $\theta$ is continuous at $\Sigma$.
Since the positive-definite cone is open, $\widehat{\Sigma}_M \xrightarrow{\mathrm{a.s.}} \Sigma$ implies that, almost surely, $\widehat{\Sigma}_M$ eventually lies in a neighborhood of $\Sigma$ on which all required Schur complements remain positive definite, so $\theta(\widehat{\Sigma}_M)$ is well defined for all sufficiently large $M$, and the continuous mapping theorem yields $\theta(\widehat{\Sigma}_M) \xrightarrow{\mathrm{a.s.}} \theta(\Sigma)$.
Two-source redundancy is likewise a composition of block selection, multiplication, inversion (via $\Gamma_{\boldU_1}$), determinant, and logarithm on the positive-definite cone, hence continuous under $\Sigma \succ 0$; the same argument yields plug-in consistency.
 
  \section{Proof of Proposition~\ref{prop:ridge}}
  \label{app:ridge}
 
For $\widehat{\Sigma} \succeq 0$ and $\lambda > 0$, $\widehat{\Sigma}_\lambda = \widehat{\Sigma} + \lambda I \succeq \lambda I \succ 0$.
Every principal submatrix of $\widehat{\Sigma}_\lambda$ is therefore positive definite, and so are all Schur complements built from these submatrices.
By Theorem~\ref{thm:main_identity}, $\Gamma_{\boldA}$ and $\Psi_{\boldA}$ constructed from $\widehat{\Sigma}_\lambda$ are positive definite, so $\theta(\widehat{\Sigma}_\lambda)$ is finite.
The mapping $\widehat{\Sigma} \mapsto \theta(\widehat{\Sigma}_\lambda)$ is a composition of continuous matrix operations on the open positive-definite cone, hence continuous in $\widehat{\Sigma}$, proving (a).
 
For (b), if $\widehat{\Sigma} \succ 0$ and all required Schur complements of $\widehat{\Sigma}$ are positive definite, then $\theta$ is continuous at $\widehat{\Sigma}$ by the same argument as in Appendix~\ref{app:structural}~(iv).
Since $\widehat{\Sigma}_\lambda \to \widehat{\Sigma}$ as $\lambda \to 0$, $\theta(\widehat{\Sigma}_\lambda) \to \theta(\widehat{\Sigma})$ by continuity.
The population analogue, with $\Sigma$ in place of $\widehat{\Sigma}$, follows by the identical argument. 
 
  \paragraph{Remark on invariance.}
The isotropic ridge $\widehat{\Sigma} + \lambda I$ depends on the coordinate scale of $(T, \boldS_{[N]})$, so the blockwise affine invariance of Theorem~\ref{thm:structural}(i) does not hold for fixed $\lambda > 0$. It is recovered in the limit $\lambda \downarrow 0$ whenever the unregularized estimator is well defined.
 
  \section{Empirical Robustness and Ridge Behavior}
  \label{app:ridge_exp}
 
\textit{Empirical observation.}
In this benchmark, plug-in estimation is numerically stable in the small-sample regime ($M \approx d$), with ridge regularization providing improvement only at the smallest $M$ and becoming unnecessary as $M$ grows.
 
\paragraph{Setup}
We probe near-singular regimes on the Section~\ref{subsec:exp_recovery} benchmark, where the joint dimension is $d = 7$.
We vary the sample size $M \in \{10, 12, 15, 25, 50, 100, 500\}$ and the ridge parameter $\lambda \in \{0, 10^{-8}, 10^{-6}, 10^{-4}, 10^{-2}, 10^{-1}, 1\}$, repeating each $(M, \lambda)$ configuration over $50$ independent trials.
The estimand is $\operatorname{TSE}$, with population value computed in closed form from the construction.
 
\paragraph{Results}
The closed-form Gaussian estimator is numerically robust. For $\lambda > 0$, Cholesky factorization of $\Gamma$ succeeds in all trials, as guaranteed by Proposition~\ref{prop:ridge}. The unregularized case ($\lambda = 0$) succeeds in all $7 \times 50 = 350$ trials across the seven sample sizes, including all $50$ trials at $M = 10$ where $M / d \approx 1.4$, consistent with the standard Wishart result that $\widehat{\Sigma}_M \succ 0$ holds almost surely once $M \geq d + 1$.
Figure~\ref{fig:exp3} maps the relative error across the $M \times \lambda$ grid (truncated to $\lambda \leq 10^{-4}$; in this benchmark, larger $\lambda$ substantially shrinks the estimated TSE and is omitted for readability).
The optimal $\lambda^{*}$ shifts from $10^{-4}$ at $M = 10$ to $0$ at $M \geq 50$: a small ridge reduces relative error when $M$ approaches $d$, while the unregularized estimator becomes preferable as $M$ grows. Proposition~\ref{prop:ridge} guarantees finiteness and continuity of the regularized estimator; the bias-reduction effect is benchmark-specific.
 
\begin{figure}[!htbp]
  \centering
  \includegraphics[width=\columnwidth]{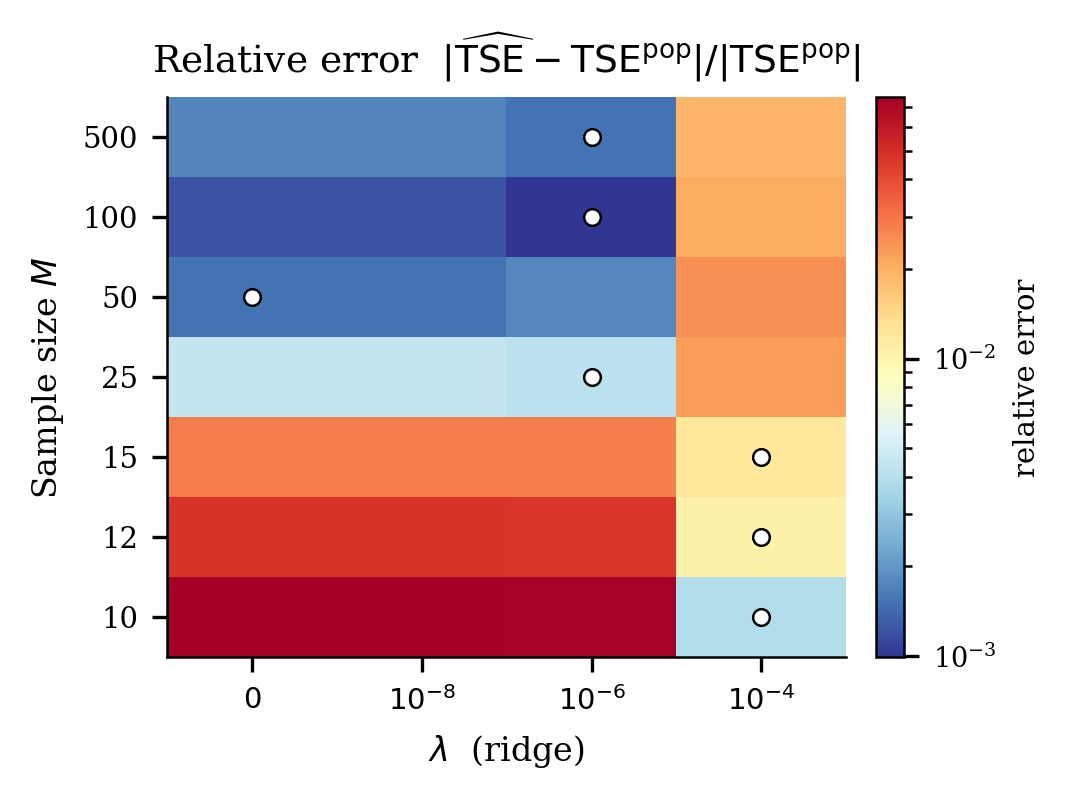}
  \caption{Relative error
  $|\widehat{\operatorname{TSE}}_\lambda
  - \operatorname{TSE}|
  / |\operatorname{TSE}|$ across the $M \times
  \lambda$ grid (mean over $50$ trials). White circles mark
  the optimal $\lambda^{*}$ for each $M$.}
  \label{fig:exp3}
\end{figure}
 
\section{Finite-Sample Convergence}
  \label{app:convergence}
 
\textit{Empirical observation.}
The plug-in Gaussian PID estimators converge to their population values as the sample size $M$ grows, with bias decaying rapidly and Monte-Carlo standard deviation visually consistent with the parametric $M^{-1/2}$ rate.
 
\paragraph{Setup}
We reuse the Section~\ref{subsec:exp_recovery} 5-source benchmark exactly and vary $M \in \{50, 100, 200, 500, 1000, 2000, 5000, 10000\}$ over $100$ independent trials per $M$.
We track five plug-in estimators with population values (in nats, computed in closed form from the construction)
\begin{align*}
&\operatorname{SE}_2 = 3.727,\quad \operatorname{SE}_3 = 3.268,\quad \operatorname{TSE} = 6.443, \\
&\operatorname{Syn}(\{S_1, S_2\} \to T) = 3.140,\\
&\operatorname{Syn}(\{S_3, S_4, S_5\} \to T) = 2.853.
\end{align*}
The smaller-magnitude signed components $\operatorname{SE}_4 \approx +0.34$ and $\operatorname{SE}_5 \approx -0.90$ are omitted from this plot for visual clarity; they show the same qualitative convergence pattern and are excluded only to keep the vertical scale of Figure~\ref{fig:exp4} compact.
 
\paragraph{Results}
All five plug-in estimators converge to their population values across the full $M$ range (Figure~\ref{fig:exp4}).
$\operatorname{SE}_2$, $\operatorname{SE}_3$, $\operatorname{TSE}$, $\operatorname{Syn}(\{S_1,S_2\} \to T)$, and $\operatorname{Syn}(\{S_3,S_4,S_5\} \to T)$ exhibit negligible bias for $M \geq 500$ and Monte-Carlo standard deviation visually consistent with $M^{-1/2}$, reaching relative SDs below $0.3\%$ at $M = 10^4$, with $\operatorname{TSE}$ converging fastest in relative terms ($0.18\%$).
No quantity exhibits divergent or systematically biased behavior, illustrating the consistency of Theorem~\ref{thm:structural}(iv), with empirically vanishing bias and numerical conditioning preserved across the full benchmark regime.
 
\begin{figure}[!htbp]
  \centering
  \includegraphics[width=\columnwidth]{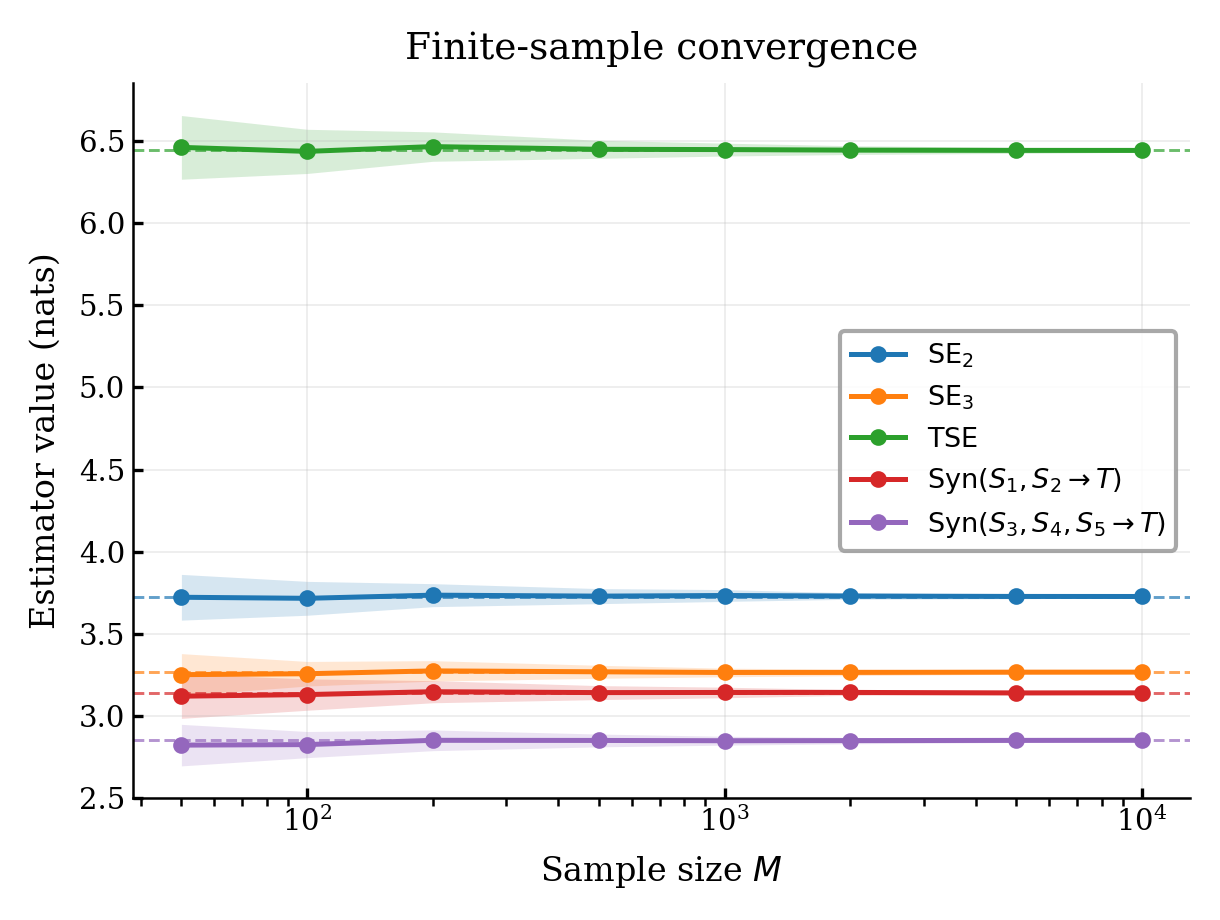}
  \caption{Finite-sample convergence on the Section~\ref{subsec:exp_recovery} benchmark (mean $\pm$ std over $100$ trials). Estimators $\operatorname{SE}_2$, $\operatorname{SE}_3$, $\operatorname{TSE}$, and the subset narrow synergies $\operatorname{Syn}(\{S_1, S_2\} \to T)$ and $\operatorname{Syn}(\{S_3, S_4, S_5\} \to T)$ versus sample size $M$ (log scale); population values shown as dashed horizontal lines.}
  \label{fig:exp4}
\end{figure}
 
\section{Two-Source Estimator Comparison}
  \label{app:n2_comparison}
 
\textit{Empirical observation.}
For two scalar Gaussian sources, our estimator agrees with existing two-source PID definitions in trivial limits and assigns non-zero unique information to both sources whenever both contribute, distinguishing it from MMI-derived measures.
 
\paragraph{Setup}
We compare four estimators on five jointly Gaussian scalar configurations spanning the redundancy/unique/synergy triangle:
(i) \emph{Pure redundancy} $S_i = T + \epsilon_i$ for $i = 1, 2$ with i.i.d.\ $\epsilon_i$;
(ii) \emph{Pure unique} $T = S_1 + \epsilon_T$ with $S_2 \perp T$;
(iii) \emph{Pure synergy} $T = S_1 + S_2 + \epsilon_T$ with $S_1 \perp S_2$;
(iv) \emph{Mixed-correlated}: variant of (iii) with $\operatorname{Corr}(S_1, S_2) = 0.3$;
(v) \emph{Mixed-asymmetric} $T = 2 S_1 + S_2 + \epsilon$ with $S_1 \perp S_2$.
All noise terms ($\epsilon_i$ and $\epsilon_T$) are i.i.d.\ $\mathcal{N}(0, 1)$ and mutually independent. In Pure redundancy, $T \sim \mathcal{N}(0, 1)$ is sampled first and $S_i$ are derived; in the remaining four configurations, $S_1$ and $S_2$ are sampled first with $\operatorname{Var}(S_i) = 1$ (jointly Gaussian with $\operatorname{Cov}(S_1, S_2) = 0.3$ in Mixed-correlated), and $T$ is then constructed by the displayed formula. All variables are centered before covariance estimation.
We compare against the Barrett MMI~\cite{barrett2015exploration}, Venkatesh--Schamberg $\delta$-PID~\cite{venkatesh2022partial}, and Venkatesh et al.\ $\widetilde G$-PID~\cite{venkatesh2023gaussian}, computed via the authors' released \texttt{gpid} package (\url{https://github.com/praveenv253/gpid}).
Each configuration uses $M = 1000$ samples averaged over $50$ trials. Sample standard deviations across these trials are at most $0.03$ for every cell of Table~\ref{tab:n2_comparison}, so we tabulate sample means only.
Kay--Ince $I_{\mathrm{dep}}$~\cite{kay2018exact} and Kay $I_{\mathrm{ig}}$~\cite{kay2024partial} are released only as R code (\url{https://github.com/JWKay/PID}); their inclusion is left to future work.
 
\paragraph{Findings}
Two observations emerge from Table~\ref{tab:n2_comparison}.
 
\noindent (i) \emph{Three Venkatesh-style measures coincide for scalar Gaussian inputs.}
Across all five configurations, Barrett's MMI, $\delta$-PID, and $\widetilde G$-PID return numerically identical decompositions to the precision of our experiment.
This is consistent with the scalar-Gaussian $\delta$-PID $=$ MMI equivalence of~\cite[Theorem 1]{venkatesh2022partial} and indicates that $\widetilde G$-PID coincides with both on these examples; we do not claim a general equivalence theorem.
 
\noindent (ii) \emph{Our estimator assigns dual non-zero unique information.}
In four of five configurations (Pure redundancy, Pure synergy, Mixed-correlated, Mixed-asymmetric), each source carries an independent component about $T$, and our estimator is the only one in the comparison that assigns positive $\operatorname{Un}_1$ \emph{and} $\operatorname{Un}_2$ simultaneously.
The MMI/$\delta$-PID/$\widetilde G$-PID family assigns positive unique information only to the source with the larger marginal $I(S_i; T)$ (and zero to the other); in symmetric configurations where $I(S_1; T) = I(S_2; T)$, both unique components vanish. This conflates uniqueness with the ranking of marginal mutual informations.
 
\begin{table}[t]
  \centering
  \caption{Two-source Gaussian PID on five controlled configurations.}
  \label{tab:n2_comparison}
  \footnotesize
  \renewcommand{\arraystretch}{1.15}
  \setlength{\tabcolsep}{3pt}
  \begin{tabular}{@{}llcccc@{}}
    \hline
    Configuration & Estimator & $\operatorname{Red}$ & $\operatorname{Un}_1$ & $\operatorname{Un}_2$ & $\operatorname{Syn}$ \\
    \hline
    Pure redundancy
      & Ours
      & $0.14$ & $0.20$ & $0.21$ & $0.00$ \\
      & MMI/$\delta$/$\widetilde G$
      & $0.33$ & $0.01$ & $0.01$ & $0.19$ \\
    \hline
    Pure unique ($S_2 \perp T$)
      & All four
      & $0.00$ & $0.35$ & $0.00$ & $0.00$ \\
    \hline
    Pure synergy
      & Ours
      & $0.06$ & $0.15$ & $0.14$ & $0.20$ \\
      & MMI/$\delta$/$\widetilde G$
      & $0.19$ & $0.01$ & $0.01$ & $0.34$ \\
    \hline
    Mixed-correlated
      & Ours
      & $0.13$ & $0.19$ & $0.19$ & $0.13$ \\
      & MMI/$\delta$/$\widetilde G$
      & $0.31$ & $0.01$ & $0.01$ & $0.31$ \\
    \hline
    Mixed-asymmetric
      & Ours
      & $0.06$ & $0.50$ & $0.03$ & $0.32$ \\
      & MMI/$\delta$/$\widetilde G$
      & $0.09$ & $0.46$ & $0.00$ & $0.35$ \\
    \hline
  \end{tabular}
 
  \vspace{2pt}
  {\scriptsize Plug-in mean over $50$ trials at $M = 1000$. \emph{Ours}: this paper's closed-form conditional-copy decomposition. \emph{MMI/$\delta$/$\widetilde G$}: Barrett MMI~\cite{barrett2015exploration}, Venkatesh--Schamberg $\delta$-PID~\cite{venkatesh2022partial}, and $\widetilde G$-PID~\cite{venkatesh2023gaussian}, computed via the \texttt{gpid} package, which coincide numerically and are collapsed into a single row.\par}
\end{table}
}{}
\end{document}